\newtheorem{proposition}{Proposition}
\title{The Raise Regression: Justification, properties and application}
\author{Rom\'an Salmer\'on G\'omez \\ %
    Departamento de M\'etodos Cuantitativos para la Econom\'ia de la Empresa \\ %
    Universidad de Granada \\ %
    Catalina Garc\'ia Garc\'ia \\ %
    Departamento de M\'etodos Cuantitativos para la Econom\'ia de la Empresa \\ %
    Universidad de Granada \\ %
    Jos\'e Garc\'ia P\'erez \\ %
    Departamento de Administraci\'on y Direcci\'on de Empresas \\ %
    Universidad de Almer\'ia %
}
\begin{document}

    \sloppy % para que justifique bien al usar el paquete 'hyphenat'

    \maketitle

    \begin{abstract}
Multicollinearity produces an inflation in the variance of the Ordinary Least Squares estimators due to the correlation between two or more independent variables (including the constant term). A widely applied solution is to estimate with penalized estimators (such as the ridge estimator, the Liu estimator, etc.) which exchange the mean square error by the bias. Although the variance diminishes with these procedures, all seems to indicate that the inference is lost and also the goodness of fit. Alternatively, the raise regression (\cite{Garcia2011} and \cite{Salmeron2017}) allows the mitigation of the problems generated by multicollinearity but without losing the inference and keeping the coefficient of determination. This paper completely formalizes the raise estimator summarizing all the previous contributions: its mean square error, the variance inflation factor, the condition number, the adequate selection of the variable to be raised, the successive raising and the relation between the raise and the ridge estimator. As a novelty, it is also presented the estimation method, the relation between the raise and the residualization, it is analyzed the norm of the estimator and the behaviour of the individual and joint significance test and the behaviour of the mean square error and the coefficient of variation. The usefulness of the raise regression as alternative to mitigate the multicollinearity is illustrated with two empirical applications.
    \end{abstract}

    \textbf{Keywords:} multicolinearity, raise regression, estimation, inference, detection, variance inflation factor, mean square error.

\section{Introduction}

The presence of collinearity between independent variables in a regression models leads to serious problems with the ordinary least squares (OLS) estimator which becomes unstable. According to \cite{willan1978meaningful}, the multicollinearity produces ``inflated variances and covariances, inflated correlations, inflated prediction variances and the concomitant difficulties in interpreting the significance values and confidence regions for parameter''.

The ridge estimation is an early answer to the problem of estimating a model with collinearity. The solution of the normal equations is an algebraic problem within the numerical analysis and, for this reason, the beginnings of ridge estimator could be found in the numerical analysis. Following \cite{om2001ridge} and \cite{singh2011}, the first antecedent is found in the Tikhonov regularization (TR), which is one of the most widely applied techniques to solve the problem of a ill-conditioning matrix. \cite{om2001ridge} concluded that the TR applied in numerical analysis has been presented in statistics and econometrics as ridge regression (RR) and, in some way, he claimed for Tikhonov the merit of the ridge regression due to he presented its regularization in 1943 while ridge regression was formally presented by \cite{HoerlKennard1970a} and \cite{HoerlKennard1970b}.

However, one question is to solve a system of equations with an ill-conditioning matrix and another question is to find an estimator for a vector of parameters in a lineal model. It is clear than an estimator is an expression that can be obtained from resolving a system of equations but an estimator must verify also some characteristics as being a random variable and allow the obtention of inference from it. Thus, while the numerical analysis focus on the resolution of a system of equation, the statistic and the econometric are focus on the qualities and properties of the estimator since an estimator should be accompanied by the possibility of doing inference and prediction which are the main reasons for its estimation.

Several authors consider that the origin of ridge regression is found on the paper presented by \cite{levenberg1944method} where the increase of matrix $\mathbf{X}^{t} \mathbf{X}$ are presented for the first time, see \cite{piegorsch1989early}.  \cite{marquardt1963algorithm} developed a procedure to Levenberg but without express knowledge of it. \cite{riley1955solving} is the first that applied the expression $\mathbf{C} = \mathbf{A} + k \mathbf{I}$ and showed that the matrix $\mathbf{C}$ is better conditioned than the matrix $\mathbf{A}$. \cite{marquardt1963algorithm} and Hoerl (1962) showed that to invert matrix $\mathbf{X}^{t} \mathbf{X} + k \mathbf{I}$ is easier than to invert matrix $\mathbf{X}^{t} \mathbf{X}$. Despite of this fact, the most of the references attribute the ridge estimator to \cite{HoerlKennard1970a} and \cite{HoerlKennard1970b}, who proposed the ridge estimator for a linear model $\mathbf{Y}_{n \times 1} = \mathbf{X}_{n \times p} \boldsymbol{\beta}_{p \times 1} + \mathbf{u}_{n \times 1}$  where $\mathbf{u} \rightsquigarrow N(0,\sigma^2 \mathbf{I})$ with the following expression:
\begin{equation}
  \widehat{\boldsymbol{\beta}}_{R}(k) = (\mathbf{X}^{t} \mathbf{X} + k \mathbf{I})^{-1} \mathbf{X}^{t} \mathbf{Y},
  \end{equation}
being a biased estimator that coincides to ordinary least squares (OLS) when $k=0$. These papers formalized the ridge estimator and showed that it verifies the mean squared error (MSE) admissibility condition assuring an improvement in MSE to OLS  for some $k\in(0,\infty)$. Thus, this estimator provided a solution to the problems indicated by \cite{willan1978meaningful}.

Regardless of the (always post-hoc) theoretical justification for the introduction of the values of $k$ in the main diagonal of matrix $\mathbf{X}^{t} \mathbf{X}$, the ridge estimator and, in general, the constrained estimators present problems with the measure of goodness of fit and with the inference. These problems appear due to the introduction of the value $k$ (a non deterministic value) which impedes the verification of the traditional sum of squares decomposition (see, for more detail, \cite{Salmeron2020}).

In relation to this problem, \cite{hoerl1990ridge} indicated: ``It appears  to be common practice with ridge regression to obtain a decomposition of the total sum of squares, and assign degrees of freedom according to established  least squared theory''  emphasising the mistake of this practice. These authors presented a descomposition for the variance analysis concluding that ``In any case, \cite{obenchain1977classical} has shown that is no unique exact F tests possible with non stochastic ridge estimators''. Formerly, \cite{schmidt1976econometrics} (p. 53) indicated that the ridge regression (RR) may not be useful to econometricians because the underdeveloped theory of hypothesis testing with RR limits its utility. He may be the first author who questioned the inference in ridge regression.

It seems to be appropriate to ponder the wide success of the ridge estimator endorsed by numerous references in the scientific literature in empirical and theoretical fields. Certainly, the ridge regression allows to overpass the problems indicated by  \cite{willan1978meaningful}, but in exchange for this, is the inference lost?
In this line, it is interesting to comment the paper by \cite{smith1980critique} entitled ``A Critique of Some Ridge Regression Methods'' which was included together to the comments and the final answer in the paper 496 of Cowles Fundation Papers. This document collects different criticism to the ridge estimator in relation to its bayesian version, the lack of invariance to scale changes and the problems with inference on which we focus now.

%\cite{lindley1980critique} raised the following question: ``Why are we doing this regression analysis? What is the purpose of the inference?'' and he even wonder: ``Finally, we ought to think about the model itself. Is normality reasonable, or should we have fatter, or thinner, tailed distributions?”. Gunts indicated that ``the  power of the F test is reduced as the multicollinearity becomes strong'' and show the conflicts and ambiguities in the hypothesis test. In the final answer, authors stated that they agree with Gunts' observation and point out that ``This is one reason why we are wary of procedures that are sensitive to the outcomes of specifics test of particular hypotheses''. In spite of the paper of \cite{smith1980critique} was not focus on the inference in ridge regression, this issue was treated for some of the commentaries and included in the final answer of the authors. As we will show, the inference of the ridge regression is still a debatable and controversial question.

The first paper that addresses in depth the problem of inference is the one of \cite{obenchain1977classical} who provided a rigorous analysis of hypothesis testing and confidence interval construction in a (generalized) ridge regression when the shrinkage factor is deterministic. The abstract of this paper indicates ``testing general linear hypotheses in multiple regression models, it is shown that non-stochastically shrunken ridge estimators yield the same central F-ratios and t-statistics as does the least squares estimator''. From this statement we raise the following question:  What is the relation between these results with the inference in OLS?. On the other hand, we also consider that the assumption that the shrinkage is non-stochastic excessively limits the applicability of the results.

By following \cite{obenchain1977classical}, other authors such as \cite{coutsourides1979f} insisted in the results of \cite{obenchain1977classical} and extend them (with the same restrictions) to the RR, principal components (PC) and Shrunken Estimators (SH) yield the same central t and F statistics as the OLS estimator. They concluded that ``It is important to point out that confidence intervals which are centred at biased estimators have greater maximum diameter that the least squares interval of the same confidence''. We raise another question: if the interval for the biased estimator have a greater diameter, then the tendency (in presence of multicollinearity) to not reject the null hypothesis of individual significance test ($\beta_i=0$) while the null hypothesis of the joint significance test is rejected, will be even higher and, then,  the application of the ridge estimation makes not statistical sense due to we only have stabilized the estimation but we have not obtained a model which overpass the validation step. On the other hand, the restrictions about the parameter $k$ will be the same: it will be not stochastic as stated by \cite{thisted1980critique} in the comment to the paper of  \cite{smith1980critique}, when indicated that the authors ``completely disregarded some of the most promising aspect of the ridge regression such as the adaptive estimation of $k$''. \cite{stahlecker1996biased} also extended the work presented by \cite{obenchain1977classical} to a more wide range of estimators that content the ridge estimator as a particular case, but keeping the deterministic hypothesis of $k$. \cite{hoerl1990ridge} had already indicated the problem of the sum of squares descomposition which is common to all the penalized estimator such as the Lasso and the elastic net apart from the ridge regression. \cite{vinod1987confidence} dedicated a full chapter to present the confidence intervals for the ridge estimator concluding that, if the economist wishes to use a confidence interval focused on the ridge estimator, three possibilities exist: the approximate Bayes, bootstrapping, those based on \cite{stein1981estimation} unbiased estimate of the mean squared error (MSE) of a biased estimator of multivariate normal mean and the bootstrap. \cite{van1980critique}, in other comment to the paper of \cite{smith1980critique}, goes even further and points in relation to inference that ``It must be  emphasized that these condition  are commonly not met; in particular, $k$ is usually stochastic  and not fixed'' supporting this statement in the papers of \cite{mcdonald1975monte}, \cite{bingham1977simulation} and in its own analysis of the study of \cite{lawless1978ridge}, that according to  \cite{van1980critique} seem to indicate it roughly holds for stochastically chosen  $k$. Also see \cite{thisted1976ridge}.

The inference of the ridge regression is still a debatable and controversial question. Thus, \cite{halawa2000tests} indicated that there are numerous works in relation to the obtention of the ridge estimator, but few works focused on the individual or joint inference of the parameters, citing among others \cite{obenchain1977classical}, \cite{coutsourides1979f}, \cite{ullah1984sampling}, \cite{ohtani1985bounds} and cited the sum of squares decomposition of \cite{hoerl1990ridge} and the degree of freedom for the analysis of the variance. \cite{halawa2000tests} investigated two non-exact,  ridge-based, type tests, for a certain values of  $k$ checked with simulations.  \cite{monarrez2007regresion} says to establish the conditions required to get that the central t-Student distribution applied in OLS will be also applicable in ridge regression but it may receive the same criticism and limitations than previous.   \cite{bae2014general} derive a test statistic for the general linear test in the RR model. The exact distribution for the test statistic is too difficult to derive; therefore, they suggest an approximate reference distribution and used numerical studies to verify that the suggested distribution for the test statistic is appropriate. \cite{gokpinar2016study}, by following  \cite{halawa2000tests} investigate others popular $k$ values used the RR for testing significance of regression coefficient and compare tests in terms of type I error rates and powers by using Monte Carlo simulation. \cite{sengupta2020asymptotic} presented another recent attempt within the field of the inference in ridge regression characterizing the asymptotic distribution of the ridge estimation when the adjustment parameter is selected with a test sample. Finally, we want to consider the work of \cite{vanhove2020collinearity} entitled ``Collinearity isn’t a disease that needs curing'',  where  the lack of inference in ridge estimator is considered as an argument against the treatment of the multicollinearity, based on the recent paper presented by \cite{goeman2018l1}.

Certainly, the multicollinearity produces an inflation in the variance due to the correlation between two or more independent variables (constant term included), but when the problem is treated applying penalized estimator, the mean squared error is exchanged by biased. Additionally, although the variance is diminished, all seems to indicate that the inference is lost  and, not only that, but also the goodness of fit is lost since the coefficient of determination makes no sense at least in its traditional form.

It is difficult to understand how being the inference of ridge regression questioned and despite of the numerous paper published in relation to ridge estimator, only a few of them are focused on the inference. Indeed, there are not much more works in relation to inference in ridge regression that the ones cited in this paper that all found its origin in the work of \cite{obenchain1977classical}. It is possible that the problem of inference in ridge regression has no solution and this fact motivate the formalization of the raise regression as a possible alternative.

The raise estimator allows, as the penalized estimator, mitigate the problems indicated by \cite{willan1978meaningful} but without losing the inference and keeping the coefficient of determination, not only the same value of the OLS regression but also its statistical interpretation due to the sum of squares is verified after the application of the raise estimator. In addition, the basis in which the raise estimator is founded lead, in a natural way, to the inclusion of the values of $k$ in the main diagonal of matrix $\mathbf{X}^{t} \mathbf{X}$ without being necessary any other justification. Indeed, recently \cite{Garcia2020} showed that the ridge estimator is a particular case of the raise estimator or the raising procedures where this estimator founds its origin. Ultimately, the goal of this paper is to present a global vision of the raising procedure, making new contributions to this technique and summarizing those already carried out in previous works (in order to have a better understanding of this technique). In addition, the procedure is performed for any number of regressors and establishing relationships with OLS and residualization, the latter technique also applied when there is a worrying degree of multicollinearity (see \cite{Garcia2019} for more details).

The structure of the paper is as follows:  Section \ref{RaiseRegression} presents as a novelty the generalization of its estimation and the relation between the raise regression with the ordinary least squares and with residualization. It is also presented originally its goodness of fit, global characteristics and individual inference. This section also includes a review about previous results in relation to the mean square error of the raise regression.
Section \ref{multicol} presents as a novelty the norm of the raised estimator, the variance of the estimators and it is shown that the raise regression is able to mitigate not only the essential multicollinearity but also the non-essential one (due to its coefficient of variation increases when the raise regression is applied)
together to a summary of how to obtain the Variance Inflation Factors and the Condition Number.
Section \ref{ChoiceVarLanda} proposes some criteria, some of them also presented in an original way, to select the variable to be raised that can be applied individually or as a combination.
Due to the Variance Inflation Factor and the Condition Number presents lower bound in raise regression, it is possible that may not exist a raising factor that get the enough mitigation of the multicollinearity. In this case, it is possible to apply a successive raise regression that is summarized in Section \ref{SucRaise}.
Section \ref{MtxMSE} compares, also as a novelty, the OLS, the raise and the ridge estimators under the criteria of MSE and also summarizes the contribution of \cite{Garcia2020} who conclude that the ridge estimator is a particular case of the raising procedures.
The usefulness of the raise estimator is illustrated with two empirical applications in Section \ref{example}.
Finally, Section \ref{conclusion} summarizes the main contributions of this paper.

\section{Raise regression}
    \label{RaiseRegression}

    This section analyze the raise regression presenting as a novelty, for any value of $p$, its estimation, goodness of fit, global characteristics and individual inference. In all cases, it is also analyze its relation with the OLS estimator of the initial model and with the residualization (see, for more details about this technique, \cite{Garcia2019}). Due to the estimators obtained with this methodology are biased, it will be interesting to analyzed its Mean Square Error (MSE). Thus, it will be also summarized the results showed by \cite{Salmeron2020} in relation to MSE.

    \subsection{Estimation}

    Considering the multiple linear regression model for $p$ independent variables and $n$ observations:
    \begin{equation}
        \mathbf{Y} = \beta_{1} + \beta_{2} \mathbf{X}_{2} + \cdots + \beta_{i} \mathbf{X}_{i} + \cdots + \beta_{p} \mathbf{X}_{p} + \mathbf{u} = \mathbf{X} \boldsymbol{\beta} + \mathbf{u},
        \label{modelo0}
    \end{equation}
    being $\mathbf{Y}$ a vector $n \times 1$ containing the observations of the dependent variable, $\mathbf{X} = [\mathbf{1} \ \mathbf{X}_{2} \dots \mathbf{X}_{i} \dots \mathbf{X}_{p}] = [ \mathbf{X}_{i} \ \mathbf{X}_{-i}]$ is a matrix with order $n \times p$ that contains (by columns) the observations of the independent variables (where $\mathbf{1}$ is a vector of ones with dimension $n \times 1$ and $\mathbf{X}_{-i}$, with $i\geq2$, is the results of eliminating the column $i$ of matrix $\mathbf{X}$), $\boldsymbol{\beta}$ is a vector $p \times 1$ that contains the coefficient of the independent variables and $\mathbf{u}$ is a vector $n \times 1$ that represents the random disturbance, which is supposed to be spherical (it is to say, $E[\mathbf{u}] = \mathbf{0}$ and $Var(\mathbf{u}) = \sigma^{2} \mathbf{I}$ where $\mathbf{0}$ is a vector of zeros with dimension $n \times 1$ and $\mathbf{I}$ the identity matrix of adequate dimensions $p \times p$).

    Considering the residuals, $\mathbf{e}_{i}$, of the auxiliary regression of the dependent variable $\mathbf{X}_{i}$, with $i\geq2$, as a function of the rest of the independent variables of the model (\ref{modelo0}):
    \begin{equation}
        \mathbf{X}_{i} = \alpha_{1} + \alpha_{2} \mathbf{X}_{2} + \cdots + \alpha_{i-1} \mathbf{X}_{i-1} + \alpha_{i+1} \mathbf{X}_{i+1} + \cdots + \alpha_{p} \mathbf{X}_{p} + \mathbf{v} = \mathbf{X}_{-i} \boldsymbol{\alpha} + \mathbf{v}, \label{modelo_aux}
    \end{equation}
    the variable $i$ is raised as $\widetilde{\mathbf{X}}_{i} = \mathbf{X}_{i} + \lambda \cdot \mathbf{e}_{i}$ with $\lambda \geq 0$ (called raising factor) and verifying\footnote{
    In addition, due to $\mathbf{e}_{i}^{t} \mathbf{X}_{-i} = \mathbf{0}$ is verified that:
    \begin{eqnarray}
        \mathbf{X}_{-i}^{t} \widetilde{\mathbf{X}}_{i} &=& \mathbf{X}_{-i}^{t} \cdot \left( \mathbf{X}_{i} + \lambda \cdot \mathbf{e}_{i} \right) = \mathbf{X}_{-i}^{t} \mathbf{X}_{i}, \nonumber \\
        \widetilde{\mathbf{X}}_{i}^{t} \mathbf{Y} &=& \left( \mathbf{X}_{i} + \lambda \cdot \mathbf{e}_{i} \right)^{t} \mathbf{Y} = \mathbf{X}_{i}^{t} \mathbf{Y} + \lambda \cdot \mathbf{e}_{i}^{t} \mathbf{Y}, \nonumber \\
        \widetilde{\mathbf{X}}_{i}^{t} \widetilde{\mathbf{X}}_{i} &=& \left( \mathbf{X}_{i} + \lambda \cdot \mathbf{e}_{i} \right)^{t} \left( \mathbf{X}_{i} + \lambda \cdot \mathbf{e}_{i} \right) = \mathbf{X}_{i}^{t} \mathbf{X}_{i} + (\lambda^{2} + 2\lambda) \cdot \mathbf{e}_{i}^{t}\mathbf{e}_{i}, \nonumber
    \end{eqnarray}
    where was used that $\mathbf{e}_{i}^{t} \mathbf{X}_{i} = \mathbf{e}_{i}^{t} \cdot \left( \mathbf{X}_{-i} \widehat{\boldsymbol{\alpha}} + \mathbf{e}_{i}^{t} \right) = \mathbf{e}_{i}^{t} \mathbf{e}_{i}.$
    } that $\mathbf{e}_{i}^{t} \mathbf{X}_{-i} = \mathbf{0}$.

    In this case, the raise regression consists in the estimation by OLS of the following model:
    \begin{equation}
        \mathbf{Y} = \beta_{1}(\lambda) + \beta_{2}(\lambda) \mathbf{X}_{2} + \cdots + \beta_{i}(\lambda) \widetilde{\mathbf{X}}_{i} + \cdots + \beta_{p}(\lambda) \mathbf{X}_{p} + \widetilde{\mathbf{u}} = \widetilde{\mathbf{X}} \boldsymbol{\beta}(\lambda) + \widetilde{\mathbf{u}},
        \label{modelo_alzado}
    \end{equation}
    where $\widetilde{\mathbf{X}} = [\mathbf{1} \ \mathbf{X}_{2} \dots \widetilde{\mathbf{X}}_{i} \dots \mathbf{X}_{p}] = [\mathbf{X}_{-i} \  \widetilde{\mathbf{X}}_{i}]$ and $\widetilde{\mathbf{u}}$ is a random disturbance which is also supposed to be spherical.
    The OLS estimator of the model (\ref{modelo_alzado}) will be given by:
    \begin{eqnarray}
        \widehat{\boldsymbol{\beta}}(\lambda) &=& \left( \widetilde{\mathbf{X}}^{t} \widetilde{\mathbf{X}} \right)^{-1} \widetilde{\mathbf{X}}^{t} \mathbf{Y} = \left(
            \begin{array}{cc}
                \mathbf{X}_{-i}^{t} \mathbf{X}_{-i} & \mathbf{X}_{-i}^{t} \widetilde{\mathbf{X}}_{i} \\
                \widetilde{\mathbf{X}}_{i}^{t} \mathbf{X}_{-i} & \widetilde{\mathbf{X}}_{i}^{t} \widetilde{\mathbf{X}}_{i}
            \end{array} \right)^{-1} \cdot \left(
            \begin{array}{c}
                \mathbf{X}_{-i}^{t} \mathbf{Y} \\
                \widetilde{\mathbf{X}}_{-i}^{t} \mathbf{Y}
            \end{array} \right) \nonumber \\
            &=& \left(
            \begin{array}{cc}
                \mathbf{X}_{-i}^{t} \mathbf{X}_{-i} & \mathbf{X}_{-i}^{t} \mathbf{X}_{i} \\
                \mathbf{X}_{i}^{t} \mathbf{X}_{-i} & \mathbf{X}_{i}^{t} \mathbf{X}_{i} + (\lambda^{2} + 2\lambda) \cdot \mathbf{e}_{i}^{t}\mathbf{e}_{i}
            \end{array} \right)^{-1} \cdot \left(
            \begin{array}{c}
                \mathbf{X}_{-i}^{t} \mathbf{Y} \\
                \mathbf{X}_{-i}^{t} \mathbf{Y} + \lambda \mathbf{e}_{i}^{t} \mathbf{Y}
            \end{array} \right) \nonumber \\
            &\underbrace{=}_{Appendix \ \ref{appendixA}}& \left(
                \begin{array}{c}
                    \left( \mathbf{X}_{-i}^{t} \mathbf{X}_{-i} \right)^{-1} \mathbf{X}_{-i}^{t} \mathbf{Y} - \widehat{\boldsymbol{\alpha}} \cdot \frac{\mathbf{e}_{i}^{t} \mathbf{Y}}{(1+\lambda) \cdot \mathbf{e}_{i}^{t} \mathbf{e}_{i}} \\
                    \frac{\mathbf{e}_{i}^{t} \mathbf{Y}}{(1+\lambda) \cdot \mathbf{e}_{i}^{t} \mathbf{e}_{i}}
                \end{array} \right)  = \left(
                \begin{array}{c}
                    \widehat{\boldsymbol{\beta}}(\lambda)_{-i} \\
                    \widehat{\beta}(\lambda)_{i}
                \end{array} \right), \label{estimador_raise}
    \end{eqnarray}
    being $\mathbf{e}_{i}^{t} \mathbf{e}_{i}$ the sum of squares of residuals (SSR) of model (\ref{modelo_aux}), $\widehat{\boldsymbol{\beta}}(\lambda)_{-i}$ are the estimator of the coefficients associated to the non altered variables in  $\mathbf{X}_{-i}$ and $\widehat{\beta}(\lambda)_{i}$ is the coefficient of the raise variable $\widetilde{\mathbf{X}}_{i}$.

    Taking into account the Appendix \ref{appendix_ortogonal}, that summarizes the residualization formally developed in \cite{Garcia2019}, it is possible to conclude that:
    \begin{itemize}
        \item For $\lambda \rightarrow +\infty$, the estimation of the not altered variables coincide to the estimation obtained with orthogonal variables:
            $$\lim \limits_{\lambda \rightarrow +\infty} \widehat{\boldsymbol{\beta}}(\lambda)_{-i} = \left( \mathbf{X}_{-i}^{t} \mathbf{X}_{-i} \right)^{-1} \mathbf{X}_{-i}^{t} \mathbf{Y} = \boldsymbol{\gamma}_{-i}.$$
        \item  For $\lambda \rightarrow +\infty$, the estimation of the raise variable tends to zero: $\lim \limits_{\lambda \rightarrow +\infty} \widehat{\beta}(\lambda)_{i} = 0$.
    \end{itemize}

    Finally, taking into account that, for $i=2,\dots,p$, it is obtained that:
    $$\widetilde{\mathbf{X}}_{i} = \mathbf{X}_{i} + \lambda \cdot \mathbf{e}_{i} = (1 + \lambda) \cdot \mathbf{X}_{i} - \lambda \cdot \left( \widehat{\alpha}_{0} + \widehat{\alpha}_{1} \mathbf{X}_{1} + \cdots + \widehat{\alpha}_{i-1} \mathbf{X}_{i-1} + \widehat{\alpha}_{i+1} \mathbf{X}_{i+1} + \cdots + \widehat{\alpha}_{p} \mathbf{X}_{p} \right),$$
    and, consequently, the matrix $\widetilde{\mathbf{X}}$ of expression (\ref{modelo_alzado}) can be written as $\widetilde{\mathbf{X}} = \mathbf{X}\cdot \mathbf{M}_{\lambda}$ where:
    \begin{equation}
        \label{mlanda}
        \mathbf{M}_{\lambda} = \left(
        \begin{array}{cccccccc}
            1 & 0 & \cdots & 0 & - \lambda \widehat{\alpha}_{0} & 0 & \cdots & 0 \\
            0 & 1 & \cdots & 0 & - \lambda \widehat{\alpha}_{1} & 0 & \cdots & 0 \\
            \vdots & \vdots & & \vdots & \vdots & \vdots & & \vdots \\
            0 & 0 & \cdots & 1 & - \lambda \widehat{\alpha}_{i-1} & 0 & \cdots & 0 \\
            0 & 0 & \cdots & 0 & 1 + \lambda & 0 & \cdots & 0 \\
            0 & 0 & \cdots & 0 & - \lambda \widehat{\alpha}_{i+1} & 1 & \cdots & 0 \\
            \vdots & \vdots & & \vdots & \vdots & \vdots & & \vdots \\
            0 & 0 & \cdots & 0 & - \lambda \widehat{\alpha}_{p} & 0 & \cdots & 1 \\
        \end{array} \right).
    \end{equation}

    Thus, we have that:
    $$\widehat{\boldsymbol{\beta}}(\lambda)=(\widetilde{\mathbf{X}}^t\cdot\widetilde{\mathbf{X}})^{-1}\widetilde{\mathbf{X}}^t\cdot\mathbf{Y} = \mathbf{M}_{\lambda}^{-1} \cdot \widehat{\boldsymbol{\beta}},$$
    and then the estimator of $\boldsymbol{\beta}$ obtained from (\ref{modelo_alzado}) is biased unless $\mathbf{M}_{\lambda} = \mathbf{I}$, which only occurs when $\lambda=0$. It is to say, when the raise regression coincides to the ordinary least square regression.

    \subsection{Mean Square Error}
        \label{mse_raise}

    Since $\widehat{\boldsymbol{\beta}}(\lambda)$ is a biased estimator of $\boldsymbol{\beta}$ when $\lambda \not= 0$, it will be interesting to obtain its mean square error (MSE). From \cite{Salmeron2020} the MSE for raise regression is given by:
    \begin{equation}
        \text{MSE}\left(\widehat{\boldsymbol{\beta}}(\lambda)\right) = \sigma^{2} tr \left( \left( \mathbf{X}_{-i}^{t} \mathbf{X}_{-i} \right)^{-1} \right) + \left( 1 + \sum \limits_{j=0, j \not= i}^{p} \widehat{\alpha}_{j}^{2} \right) \cdot \beta_{i}^{2} \cdot \frac{\lambda^{2} + h}{(1+\lambda)^{2}},
    \end{equation}
    where $h = \frac{\sigma^{2}}{\beta_{i}^{2} \cdot \mathbf{e}_{i}^{t} \mathbf{e}_{i}}$ and $i=2,\dots,p$.

    Once it is obtained the estimations of $\sigma^2$ and $\beta_{i}$ from model (\ref{modelo0}) by OLS, $\lambda_{min} = \frac{\widehat{\sigma}^{2}}{\widehat{\beta}_{i}^{2} \cdot \mathbf{e}_{i}^{t} \mathbf{e}_{i}}$ minimizes MSE and it is verified that $\text{MSE}\left(\widehat{\boldsymbol{\beta}}(\lambda_{min})\right) < \text{MSE}\left(\widehat{\boldsymbol{\beta}}\right)$.
    That is, it is always possible to obtain a value for $\lambda$ that verifies that it has an associated MSE less than that of OLS.

    Also, if $\lambda_{min} > 1$ it is obtained that $\text{MSE}\left(\widehat{\boldsymbol{\beta}}(\lambda)\right) < \text{MSE}\left(\widehat{\boldsymbol{\beta}}\right)$ for all $\lambda > 0$. That is, it is always verified that the MSE of the raised is less than that of the OLS.
    If $\lambda_{min} < 1$, then $\text{MSE}\left(\widehat{\boldsymbol{\beta}}(\lambda)\right) \leq \text{MSE}\left(\widehat{\boldsymbol{\beta}}\right)$ for $\lambda \leq 2 \cdot \frac{\widehat{\sigma}^{2}}{\widehat{\beta}_{i}^{2} \cdot \mathbf{e}_{i}^{t} \mathbf{e}_{i} - \widehat{\sigma}^{2}}$.

    Finally:
    $$\lim \limits_{\lambda \rightarrow +\infty} \text{MSE}\left(\widehat{\boldsymbol{\beta}}(\lambda)\right) = \sigma^{2} tr \left( \left( \mathbf{X}_{-i}^{t} \mathbf{X}_{-i} \right)^{-1} \right) + \left( 1 + \sum \limits_{j=0, j \not= i}^{p} \widehat{\alpha}_{j}^{2} \right) \cdot \beta_{i}^{2},$$
    expression that differs from the MSE of residualization (see Appendix \ref{appendix_ortogonal}).

    \subsection{Goodness of fit and global characteristics}
        \label{goodness_raise}

    To analyze the goodness of fit of model (\ref{modelo_alzado}), its residuals are obtained from the following expression:
    \begin{eqnarray}
        \mathbf{e}(\lambda) &=& \mathbf{Y} - \widetilde{\mathbf{X}} \widehat{\boldsymbol{\beta}}(\lambda) = \mathbf{Y} - ( \mathbf{X}_{-i} \ \widetilde{\mathbf{X}}_{i} ) \cdot \left(
            \begin{array}{c}
                \widehat{\boldsymbol{\beta}}(\lambda)_{-i} \\
                    \widehat{\beta}(\lambda)_{i}
                \end{array} \right) \nonumber \\
            &=& \mathbf{Y} - \mathbf{X}_{-i} \left( \mathbf{X}_{-i}^{t} \mathbf{X}_{-i} \right)^{-1} \mathbf{X}_{-i }^{t} \mathbf{Y} + \left( (\lambda + 1) \cdot \mathbf{e}_{i}^{t} \mathbf{e}_{i} \right)^{-1} \mathbf{X}_{-i} \cdot \widehat{\boldsymbol{\alpha}} \cdot \mathbf{e}_{i}^{t} \mathbf{Y} \nonumber \\
            & & - \left( (\lambda + 1) \cdot \mathbf{e}_{i}^{t} \mathbf{e}_{i} \right)^{-1} \cdot \mathbf{X}_{i} \cdot \mathbf{e}_{i}^{t} \mathbf{Y} - \lambda \left( (\lambda + 1) \cdot \mathbf{e}_{i}^{t} \mathbf{e}_{i} \right)^{-1} \cdot \mathbf{e}_{i} \cdot \mathbf{e}_{i}^{t} \mathbf{Y} \nonumber \\
            &=& \mathbf{Y} - \mathbf{X}_{-i} \left( \mathbf{X}_{-i}^{t} \mathbf{X}_{-i} \right)^{-1} \mathbf{X}_{-i}^{t} \mathbf{Y} + \left( (\lambda + 1) \cdot \mathbf{e}_{i}^{t} \mathbf{e}_{i} \right)^{-1} \cdot \left( \mathbf{X}_{-i}^{t} \widehat{\boldsymbol{\alpha}} - \mathbf{X}_{i}^{t} - \lambda \mathbf{e}_{i}^{t} \right) \cdot \mathbf{e}_{i}^{t} \mathbf{Y} \nonumber \\
            &=& \mathbf{Y} - \mathbf{X}_{-i} \left( \mathbf{X}_{-i}^{t} \mathbf{X}_{-i} \right)^{-1} \mathbf{X}_{-i}^{t} \mathbf{Y} + \left( (\lambda + 1) \cdot \mathbf{e}_{i}^{t} \mathbf{e}_{i} \right)^{-1} \cdot (1 + \lambda) \mathbf{e}_{i} \cdot \mathbf{e}_{i}^{t} \mathbf{Y} \nonumber \\
            &=& \mathbf{Y} - \mathbf{X}_{-i} \left( \mathbf{X}_{-i}^{t} \mathbf{X}_{-i} \right)^{-1} \mathbf{X}_{-i}^{t} \mathbf{Y} - \mathbf{e}_{i} \cdot \frac{\mathbf{e}_{i}^{t} \mathbf{Y}}{\mathbf{e}_{i}^{t} \mathbf{e}_{i}}, \quad i=2,\dots,p. \label{residuos_alzado}
    \end{eqnarray}

    Taking into account the results included in Appendix \ref{appendix_inicial}, it is obtained that $\mathbf{e}(\lambda) = \mathbf{e}$. It is to say, the residuals of initial (\ref{modelo0}) and raised (\ref{modelo_alzado}) models coincide. It that case it is possible to conclude that:
    \begin{itemize}
        \item It is evident that the sum of squares of residual also coincide and, consequently, both models provide the same estimation for the variance of the random disturbance: $\widehat{\sigma}^{2} = \widehat{\sigma}^{2}(\lambda)$. % {\color{blue} En consecuencia, se podría establecer que $\mathbf{u} = \widetilde{\mathbf{u}}$}.
        \item Since the explained variable is the same in both models, the sum of total squares coincide and, consequently, the coefficient of determination of both models is also the same: $R^{2} = R^{2}(\lambda)$.
        \item Since the statistic $F$ of the test of joint signification can be expressed as a function of the coefficient of determination, it is evident that the joint signification of both models coincide: $F_{exp} = F_{exp}(\lambda)$.
        \item These global characteristics (estimated variance of the random disturbance, the coefficient of determination and the statistic of the joint significance test) coincide also to the residualization (see Appendix \ref{appendix_ortogonal}).
        \item Since the observed values are the same, both models provide the same estimations for the dependent variable. Thus, when this technique is applied, the estimations are not modified while the degree of multicollinearity is mitigated as will be show in section \ref{multicol}.
    \end{itemize}

    \subsection{Individual inference}
        \label{inference_raise}

    In relation to the individual inference of model (\ref{modelo_alzado}), taking into account (see Appendix \ref{appendixA}) that:
    \begin{equation}
        \label{inferencia_alzada1}
        \left( \widetilde{\mathbf{X}}^{t} \widetilde{\mathbf{X}} \right)^{-1} = \left(
        \begin{array}{cc}
            \left( \mathbf{X}_{-i}^{t} \mathbf{X}_{-i} \right)^{-1} + \left( (\lambda + 1)^{2} \cdot \mathbf{e}_{i}^{t} \mathbf{e}_{i} \right)^{-1} \cdot \widehat{\boldsymbol{\alpha}} \widehat{\boldsymbol{\alpha}}^{t} & - \widehat{\boldsymbol{\alpha}} \cdot \left( (\lambda + 1)^{2} \cdot \mathbf{e}_{i}^{t} \mathbf{e}_{i} \right)^{-1} \\
            - \widehat{\boldsymbol{\alpha}}^{t} \cdot \left((\lambda + 1)^{2} \cdot \mathbf{e}_{i}^{t} \mathbf{e}_{i} \right)^{-1} & \left( (\lambda + 1)^{2} \cdot \mathbf{e}_{i}^{t} \mathbf{e}_{i} \right)^{-1}
        \end{array} \right),
    \end{equation}
    the null hypothesis  $H_{0}: \beta(\lambda)_{i} = 0$, with $i=2,\dots,p$, is rejected (in favor of the alternative hypothesis $H_{1}: \beta(\lambda)_{i} \not= 0$) when:
    \begin{equation}
        \label{inferencia_alzado1}
        t_{exp} (\beta(\lambda)_{i}) = \left| \frac{\widehat{\beta}(\lambda)_{i}}{\widehat{\sigma} \cdot \sqrt{\left( (\lambda + 1)^{2} \cdot \mathbf{e}_{i}^{t} \mathbf{e}_{i} \right)^{-1}}} \right| = \left| \frac{\mathbf{e}_{i}^{t} \mathbf{Y}}{\widehat{\sigma} \sqrt{\mathbf{e}_{i}^{t} \mathbf{e}_{i}}} \right| > t_{n-p}(1-\alpha/2),
    \end{equation}
    where $t_{n-p}(1-\alpha/2)$ is the value of a t-student distribution with $n-p$ degree of freedom that accumulate to its let a probability of $1-\alpha/2$ and $\widehat{\sigma}^{2}$ is the OLS estimation of the variance of the random disturbance of initial model (\ref{modelo0}) that, as previously commented, coincide to the one of the raised model (\ref{modelo_alzado}).

    Taking into account Appendix \ref{appendix_inicial}, it is obtained that $t_{exp} (\beta(\lambda)_{i}) = t_{exp}(\beta_{i})$. It is to say, the individual inference of the raised variable is unaltered.

    At the same time, the null hypothesis $H_{0}: \beta(\lambda)_{-i,j} = 0$ is rejected (in favor of the alternative hypothesis $H_{1}: \beta(\lambda)_{-i,j} \not= 0$) when:
    \begin{equation}
        t_{exp} (\beta(\lambda)_{-i,j}) = \left| \frac{\widehat{\beta}(\lambda)_{-i,j}}{\widehat{\sigma} \cdot \sqrt{w_{jj} + \left( (\lambda + 1)^{2} \cdot \mathbf{e}_{i}^{t} \mathbf{e}_{i} \right)^{-1} \cdot \widehat{\alpha}_{j}^{2}}} \right|
        = \left| \frac{\widehat{\gamma}_{-i,j} - (\lambda+1)^{-1} \cdot \widehat{\alpha}_{j} \widehat{\beta}_{i}}{\widehat{\sigma} \cdot \sqrt{w_{jj} + \left( (\lambda + 1)^{2} \cdot \mathbf{e}_{i}^{t} \mathbf{e}_{i} \right)^{-1} \cdot \widehat{\alpha}_{j}^{2}}} \right| > t_{n-p}(1-\alpha/2), \label{inferencia_alzado2}
    \end{equation}
    where $w_{jj}$ and $\widehat{\alpha}_{j}^{2}$ are the elements ($j$,$j$) of matrices $\left( \mathbf{X}_{-i}^{t} \mathbf{X}_{-i} \right)^{-1}$ and $\widehat{\boldsymbol{\alpha}} \widehat{\boldsymbol{\alpha}}^{t}$, respectively.
    Note that $\beta(\lambda)_{-i,j}$ is the element $j$ of $\boldsymbol{\beta}(\lambda)_{-i}$ with $j=1,\dots,i-1,i+1,\dots,p$.

    To summarize, the inference of the non raised variables changes as the raising factor varies. \cite{Salmeron2017} presents a simulation for $p=3$ where supposing a sample size equal to 20, it is concluded that in more than in the 50\% of the cases the null hypothesis is rejected. In addition, this percentage increases as the size of the sample increases reaching almost the 84\% for $n=160$. This fact is due to the effect that the sample size has on the variance of the estimated variables (see \cite{OBrien} for more details).

    Finally, taking into account the Appendix \ref{appendix_ortogonal}, it is obtained that:
    $$\lim \limits_{\lambda \rightarrow +\infty} t_{exp} (\beta(\lambda)_{-i,j}) = t_{exp} (\gamma_{-i,j}),$$
    it is to say, for $\lambda \rightarrow +\infty$ the inference of the non raised variables is the same in the raise and residualization.

\section{Multicollinearity}
    \label{multicol}

    Originally, the raise regression appears with the goal of estimating multiple linear regression models with worrying multicollinearity. Thus, its application should mitigate the multicollinearity.

    To check if this goal is reached, this section analyze as a novelty for any value of $p$, the norm of its estimations, the variance of the estimators and the usefulness of the raise regression to mitigate the essential and non-essential multicollinearity.

    This section also summarizes the results presented by \cite{Garcia2011}, \cite{Garcia2020}, \cite{Salmeron2020},  \cite{Roldan2019} and \cite{Roldan2020} that adapted the Variance Inflation Factor and the Condition Number, traditionally applied to measure the degree of multicollinearity, to be used after the application of the raise regression.

    \subsection{Estimator norm}
        \label{norm_raise}

    As known, one of the consequences of strong multicollinearity is the instability of the estimated coefficients. For this reason, it will be interesting to analyze the norm\footnote{If $\mathbf{x} \in \mathbb{R}$, then $\left\| \mathbf{x} \right\| = \sqrt{x_{1}^{2} + x_{2}^{2} + \dots + x_{n}^{2}}$. That is, $\left\| \mathbf{x} \right\|^{2} = \mathbf{x}^{t}\mathbf{x}$.} of the raise estimator (expression (\ref{estimador_raise})) in order to analyze if there is a value of the raising factor that stabilize the norm of the estimator analogously to that presented in \cite{HoerlKennard1970b} and \cite{HoerlKennard1970a} for the ridge estimation.

    Thus, from expression (\ref{estimador_raise}) it is obtained that:
    \begin{eqnarray}
        \left\| \widehat{\boldsymbol{\beta}}(\lambda) \right\|^{2} &=& \widehat{\boldsymbol{\beta}}(\lambda)^{t} \widehat{\boldsymbol{\beta}}(\lambda) = \widehat{\boldsymbol{\beta}}(\lambda)_{-i}^{t} \widehat{\boldsymbol{\beta}}(\lambda)_{-i} + \widehat{\beta}(\lambda)_{i}^{t} \widehat{\beta}(\lambda)_{i} \nonumber \\
            &=& \mathbf{Y}^{t} \mathbf{X}_{-i} \left( \mathbf{X}_{-i}^{t} \mathbf{X}_{-i} \right)^{-1} \cdot \left( \mathbf{X}_{-i}^{t} \mathbf{X}_{-i} \right)^{-1} \mathbf{X}_{-i}^{t} \mathbf{Y} - \mathbf{Y}^{t} \mathbf{X}_{-i} \left( \mathbf{X}_{-i}^{t} \mathbf{X}_{-i} \right)^{-1} \cdot \widehat{\boldsymbol{\alpha}} \cdot \frac{\mathbf{e}_{i}^{t} \mathbf{Y}}{(1+\lambda) \cdot \mathbf{e}_{i}^{t} \mathbf{e}_{i}} \nonumber \\
            & & - \frac{\mathbf{Y}^{t} \mathbf{e}_{i}}{(1+\lambda) \cdot \mathbf{e}_{i}^{t} \mathbf{e}_{i}}\cdot \widehat{\boldsymbol{\alpha}}^{t} \cdot \left( \mathbf{X}_{-i}^{t} \mathbf{X}_{-i} \right)^{-1} \mathbf{X}_{-i}^{t} \mathbf{Y} + \frac{\mathbf{Y}^{t} \mathbf{e}_{i}}{(1+\lambda) \cdot \mathbf{e}_{i}^{t} \mathbf{e}_{i}} \cdot \widehat{\boldsymbol{\alpha}}^{t} \widehat{\boldsymbol{\alpha}} \cdot \frac{\mathbf{e}_{i}^{t} \mathbf{Y}}{(1+\lambda) \cdot \mathbf{e}_{i}^{t} \mathbf{e}_{i}} \nonumber \\
            & & + \left( \frac{\mathbf{e}_{i}^{t} \mathbf{Y}}{(1+\lambda) \cdot \mathbf{e}_{i}^{t} \mathbf{e}_{i}} \right)^{2}. \label{norma_raise}
    \end{eqnarray}

    From expression (\ref{norma_raise}) it is obtained that:
    $$\lim \limits_{\lambda \rightarrow +\infty} \left\| \widehat{\boldsymbol{\beta}}(\lambda) \right\|^{2} = \mathbf{Y}^{t} \mathbf{X}_{-i} \left( \mathbf{X}_{-i}^{t} \mathbf{X}_{-i} \right)^{-1} \cdot \left( \mathbf{X}_{-i}^{t} \mathbf{X}_{-i} \right)^{-1} \mathbf{X}_{-i}^{t} \mathbf{Y} \underbrace{=}_{Appendix \ \ref{appendix_ortogonal}} \widehat{\boldsymbol{\gamma}}_{-i}^{t} \widehat{\boldsymbol{\gamma}}_{-i} = \left\| \widehat{\boldsymbol{\gamma}}_{-i} \right\|^{2}.$$
    It is to say, the norm of the raise estimator presents a horizonal asymptote  (that coincides to the one of the estimations of the non-residualizated variables in residualization). Thus, there must be a raising factor for which the norm of the estimator is stabilized.

    \subsection{Estimated variance}
        \label{var_beta_raise}

    Another symptom of multicollinearity in a multiple linear regression model is the inflation of the variance of the estimated parameters. This subsection analyzes the effect of the raised regression on these variances.

    Thus, by comparing the elements of the main diagonal of expressions (\ref{inferencia_alzada1}) and (\ref{eq_anexo_2}) it is obtained that both coincide when  $\lambda = 0$ and the first diminish as the value of $\lambda$ increases. Then, the raised regression diminishes the variances of the estimated coefficients which are supposed to be inflated as a consequence of the multicollinearity. This fact shows the usefulness of this technique to mitigate the multicollinearity.

     In addition, taking into account the expression of $\left(\mathbf{X}_{O}^{t} \mathbf{X}_{O} \right)^{-1}$ included in expression (\ref{est_ortogonal}), it is observed that for $\lambda \rightarrow + \infty$ the elements of the main diagonal of (\ref{inferencia_alzada1}) for the non-raised variables coincide to the one of $\left(\mathbf{X}_{O}^{t} \mathbf{X}_{O} \right)^{-1}$. Once again, it is showed the asymptotic relation between the raise and the residualization technique.

    \subsection{Variance Inflation Factor}
        \label{vif_raise}

    Given the model (\ref{modelo0}), the  Variance Inflation Factor (VIF) is obtained as:
    \begin{equation}
        VIF(i) = \frac{1}{1 - R_{i}^{2}}, \quad i=2,\dots,p,
        \label{vif}
    \end{equation}
    where $R_{i}^{2}$ is the coefficient of determination of auxiliary regression (\ref{modelo_aux}).
    If the variable $\mathbf{X}_{i}$ has not linear relation (it is orthogonal) with the rest of independent variables, it is obtained that $R_{i}^{2} = 0$ and then $VIF(i) = 1$. Thus, the $R_{i}^{2}$ (and consequently the $VIF(i)$) increases as the relation between the variables is stronger. Then, the higher the value of the VIF associated to the variable $\mathbf{X}_{i}$ the higher the linear relation between this variable and the rest of the independent variables of the model (\ref{modelo0}). For values of the VIF higher than 10, it is considered that the presence of multicollinearity is worrying (see, for example, \cite{Marquardt1970} or \cite{Salmeron2018}).

    Analogously, the VIF in the raise regression is given by:
    \begin{equation}
        VIF(l,\lambda) = \frac{1}{1 - R_{l}^{2}(\lambda)}, \quad l=2,\dots,p,
        \label{vif_raised}
    \end{equation}
    where to calculate $R_{l}^{2}(\lambda)$ it is necessary to distinguish two cases depending of which will be the dependent variable of the auxiliary regression:  a) if it is the raised variable, $\widetilde{\mathbf{X}}_{i}$ with $i=2,\dots,p$, or b) if it is not the raised variable, $\mathbf{X}_{j}$ with $j=2,\dots,p$ being $j \not= i$.

    \cite{Garcia2011} shows that the VIF associated to the raise variable diminishes as the raising factor increases, while \cite{Garcia2020} shows that the VIFs of the non-raised variables also diminish as the raising factor increases.

    Finally, these results are completed with the contribution of \cite{Salmeron2020}, where it is shown that the $VIF(l,\lambda)$ has the very desirable properties of a) being continuous ($VIF(l,0) = VIF(l)$ with $l=2,\dots,p$), b) monotone in the raise parameter ($VIF(l,\lambda)$ decrease if $\lambda$ increase) and c) higher than one ($VIF(l,\lambda) \geq 1$, for $\lambda \geq 0$ and $l=2,\dots,p$).
    In addition, the following conditions are verified:
    \begin{itemize}
        \item When $\lambda \rightarrow +\infty$, the VIF associated to the raised variable tends to one.
        \item When $\lambda \rightarrow +\infty$, the VIFs associated to the not raised variables present a horizonal asymptote given by the VIF of the regression $\mathbf{Y} = \mathbf{X}_{-i} \boldsymbol{\xi} + \mathbf{w}$. Note that in this case, the limit of the VIF of the non raised variables coincide to the unaltered variables of the residualization (see Appendix \ref{appendix_ortogonal}).
    \end{itemize}

    \subsection{Condition Number}
        \label{cn_raise}

    Given the model (\ref{modelo0}), the Condition Number (CN) is defined as:
	\begin{equation}
	   K(\mathbf{X})=\sqrt{\frac{\xi_{max}}{\xi_{min}}}, \label{cn}
	\end{equation}
	where $\xi_{max}$ and $\xi_{min}$ are the maximum and the minimum eigenvalues of the matrix $\mathbf{X}^{t} \mathbf{X}$, respectively. Note that before calculating the eigenvalues of the matrix $\mathbf{X}$, it is necessary to transform it to present unit length columns. Values of CN between 20 and 30 indicate moderate multicollinearity while values higher than 30 indicate strong multicollinearity (see, for example, \cite{Belsley1980} or \cite{Salmeron2018}).

    Analogously, the CN of the raise regression is given by the following expression:
    \begin{equation}
        K(\widetilde{\mathbf{X}},\lambda)=\sqrt{\frac{\widetilde{\xi}_{max}}{\widetilde{\xi}_{min}}},
        \label{cn_raised}
    \end{equation}
    where $\widetilde{\xi}_{max}$ and $\widetilde{\xi}_{min}$ are, respectively, the maximum and minimum eigenvalues of matrix $\widetilde{\mathbf{X}}^{t} \widetilde{\mathbf{X}}$.

     \cite{Roldan2019} analyzed with detail the condition number $K(\widetilde{\mathbf{X}},\lambda)$ for $p=3$ in the model (\ref{modelo0}) and standardizing the independent variables (subtracting its mean and dividing by the square root of $n$ times its variance). \cite{Roldan2020} generalized these results for any value of $p$ and without transforming the variables obtaining that $K(\widetilde{\mathbf{X}},\lambda)$ has the very desirable properties of a) being continuous ($K(\widetilde{\mathbf{X}},0) = K(\mathbf{X})$), b) monotone in the raise parameter ($K(\widetilde{\mathbf{X}},\lambda)$ decrease if $\lambda$ increase) and c) higher than one ($K(\widetilde{\mathbf{X}},\lambda) \geq 1$ for $\lambda \geq 0$) since when $\lambda \rightarrow +\infty$, $K(\widetilde{\mathbf{X}},\lambda)$ presents a horizontal asymptote that coincide to the CN of the matrix  $\mathbf{X}_{-i}$. Note that in this case, the limit of the CN coincide to the CN of the residualization (see Appendix \ref{appendix_ortogonal}).

    \subsection{Essential and non-essential multicollinearity}
        \label{coef_variation}

    \cite{MarquardtSnee}, \cite{Marquardt1980} and \cite{SneeMarquardt} distinguished between essential and non-essential multicollinearity. The first one is the existence of a linear relation between the independent variables of model (\ref{modelo0}) excluding the intercept, while the second one is the linear relation between the intercept and at least one of the independent variables of the model.
    In \cite{Salmeron2020centered} is presented a definition of nonessential multicollinearity ``that generalizes the definition given by Marquardt and Snee. Note that this generalization can be understood as a particular kind of essential multicollinearity: a near-linear relation between two independent variables with light variability. However, it is shown that this kind of multicollinearity is not detected by the VIF, and for this reason, we consider it more appropriate to include it within the nonessential multicollinearity''.

    \cite{Salmeron2018} and \cite{Salmeron2020a} showed that the VIF is only able to detect the essential multicollinearity. Therefore, from results of subsection \ref{vif_raise} it is concluded that the raise regression mitigates this kind of multicollinearity. It is also possible to conclude that the raise regression mitigates the non-essential multicollinearity due to the CN also diminishes when the raise regression is applied (see subsection \ref{cn_raise}) being this measure able to detect both kind of multicollinearity. However, this question, that raise regression mitigates non-essential multicollinearity, is analyzed with more detail below.

    \cite{Salmeron2020a} showed that the most adequate measure to detect non-essential multicollinearity is the coefficient of variation (CV) of each one of the independent variables providing the following threshold: values of the coefficient of variation lower than 0.1002506 will indicate that the degree of non-essential multicollinearity is worrying. Thus, if it is shown that the coefficient of variance increases when the raise regression is applied, it will show that the raise regression mitigates the non-essential multicollinearity.

    Indeed, taking into account that the $CV (\widetilde{\mathbf{X}}_{i}) = \frac{\sqrt{var(\widetilde{\mathbf{X}}_{i})}}{\overline{\widetilde{\mathbf{X}}}_{i}}$ for $i=2,\dots,p$ and that:
    \begin{itemize}
        \item $var(\widetilde{\mathbf{X}}_{i}) = var(\mathbf{X}_{i} + \lambda \cdot \mathbf{e}_{i}) = var(\mathbf{X}_{i}) + \lambda^{2} \cdot var(\mathbf{e}_{i}) + 2 \cdot \lambda \cdot cov(\mathbf{X}_{i}, \mathbf{e}_{i}) = var(\mathbf{X}_{i}) + (\lambda^{2} + 2 \cdot \lambda) \cdot var(\mathbf{e}_{i})$ due to $cov(\mathbf{X}_{i}, \mathbf{e}_{i}) = \frac{1}{n} \cdot \mathbf{X}_{i}^{t} \mathbf{e}_{i} = \frac{1}{n} \cdot (\mathbf{e}_{i}^{t} \mathbf{e}_{i} + \widehat{\mathbf{X}}_{i}^{t} \mathbf{e}_{i}) = \frac{1}{n} \cdot \mathbf{e}_{i}^{t} \mathbf{e}_{i}$ where $\widehat{\mathbf{X}}_{i}$ is the estimation of $\mathbf{X}_{i}$ obtained from the auxiliary regression (\ref{modelo_aux}) and, consequently, it is verified that $\widehat{\mathbf{X}}_{i}^{t} \mathbf{e}_{i} = 0$, and
        \item $\overline{\widetilde{\mathbf{X}}}_{i} = \overline{\mathbf{X}}_{i} + \lambda \cdot \overline{\mathbf{e}}_{i} = \overline{\mathbf{X}}_{i}$,
    \end{itemize}
    it is verified that:
    $$CV (\widetilde{\mathbf{X}}_{i}) = \frac{\sqrt{var(\mathbf{X}_{i}) + (\lambda^{2} + 2 \cdot \lambda) \cdot var(\mathbf{e}_{i})}}{\overline{\mathbf{X}}_{i}}, \quad i \geq 2.$$

   In this case, since $(\lambda^{2} + 2 \cdot \lambda) \cdot var(\mathbf{e}_{i}) \geq 0$, then  $CV (\widetilde{\mathbf{X}}_{i}) \geq CV(\mathbf{X}_{i}) = \frac{\sqrt{var(\mathbf{X}_{i})}}{\overline{\mathbf{X}}_{i}}$ for $\lambda \geq 0$.

\section{Choice of variable to raise and the raise parameter}
    \label{ChoiceVarLanda}

    From the results of previous sections, the possible criteria to select the variable to be raised are enumerated:
    \begin{itemize}
        \item To raise the variable whose parameter will be initially different from zero, taking into account that this characteristic will be maintained (see subsection \ref{inference_raise}).
        \item If none estimated parameters are not individually significant, it is recommendable to raise the variable considered less important since the not raised coefficient variable (considered more relevant) can become individually significant after the application of the raise estimation (see subsection \ref{inference_raise}).
        \item To raise the variable with the higher VIF (see subsection \ref{vif_raise}) since it limit tends to 1.
        \item To raise the variable that after being raised presents a lower value for the asymptote of the VIF or the CN (see subsections \ref{vif_raise} and \ref{cn_raise}).
        \item To raise the variable with the lower CV (see subsection \ref{coef_variation}) since it will increase.
    \end{itemize}

    In relation to the criteria to select the raising factor we propose the following possibilities:
    \begin{itemize}
        \item To select the value that minimizes the MSE or the one that provides a value of MSE lower than the one provided by OLS (see subsection \ref{mse_raise}).
        \item To select the value that stabilize the norm of the raise estimator (see subsection \ref{norm_raise}).
        \item To select, if it exists, the value that provides values for the VIF or the CN lower or for CV higher than the established thresholds to determine the existence of worrying multicollinearity.
    \end{itemize}

   To finish, note that the criteria to select the variable to be raised or the raising factor can be not only one of the previous proposals but a combination of them.

\section{Successive raise}
    \label{SucRaise}
     \label{MtxMSE}

    Section \ref{multicol} showed that may not exist the raising factor that get the enough mitigation of the multicollinearity since the VIF and the CN present a lower bound. For this reason, it can be needed to raised more than one variable.

    \cite{Garcia2017}  described the successive raise that we summarize here with the following steps:
    \begin{description}
      \item[Step 1:] Raise the vector $\mathbf{X}_2$ from the regression of $\mathbf{X}_2$ with the resting vectors $\{ \mathbf{1}, \mathbf{X}_3, \mathbf{X}_4, \ldots, \mathbf{X}_p \}$. From this auxiliary regression, it is we constructed the raise vector of $\mathbf{X}_2$, $\widetilde{\mathbf{X}}_2$, as $\mathbf{X}_2+\lambda_2 \cdot \mathbf{e}_2$ where $\mathbf{e}_2$ is the residual vector of the auxiliary regression. Now, it is described a new space of vectors where $\mathbf{X}_2$ is replaced by $\widetilde{\mathbf{X}}_2$, $\{ \mathbf{1}, \widetilde{\mathbf{X}}_2, \mathbf{X}_3, \mathbf{X}_4, \ldots, \mathbf{X}_p \}$.
      \item[Step 2:] Raise the vector $\mathbf{X}_3$ from the regression of $\mathbf{X}_3$ with the resting vectors $\{ \mathbf{1}, \widetilde{\mathbf{X}}_2, \mathbf{X}_4, \ldots, \mathbf{X}_p \}$. From this auxiliary regression, it is we constructed the raise vector of $\mathbf{X}_3$, $\widetilde{\mathbf{X}}_3$, as $\mathbf{X}_3+\lambda_3 \cdot \mathbf{e}_3$ where $\mathbf{e}_3$ is the residual vector of the auxiliary regression. Now, it is described a new space of vectors where $\mathbf{X}_3$ is replaced by $\widetilde{\mathbf{X}}_3$, $\{ \mathbf{1}, \widetilde{\mathbf{X}}_2, \widetilde{\mathbf{X}}_3, \mathbf{X}_4, \ldots, \mathbf{X}_p \}$.
      \item[Step $j$:] Raise the vector $\mathbf{X}_j$ from the regression of $\mathbf{X}_j$ with the resting vectors $\{ \mathbf{1}, \widetilde{\mathbf{X}}_2, \widetilde{\mathbf{X}}_3, \ldots, \widetilde{\mathbf{X}}_{j-1}, \mathbf{X}_{j+1}, \ldots, \mathbf{X}_p \}$. From this auxiliary regression, it is we constructed the raise vector of $\mathbf{X}_j$, $\widetilde{\mathbf{X}}_j$, as $\mathbf{X}_j+\lambda_j \cdot \mathbf{e}_j$ where $\mathbf{e}_j$ is the residual vector of the auxiliary regression. Now, it is described a new space of vectors where $\mathbf{X}_j$ is replaced by $\widetilde{\mathbf{X}}_j$, $\{ \mathbf{1}, \widetilde{\mathbf{X}}_2, \widetilde{\mathbf{X}}_3, \widetilde{\mathbf{X}}_4, \ldots, \widetilde{\mathbf{X}}_j, \ldots, \mathbf{X}_p \}$.
      \item[Step $p$:] Raise the vector $\mathbf{X}_p$ from the residual vector, $\mathbf{e}_{p}$, of the regression of $\mathbf{X}_p$ with the set of vectors $\{ \mathbf{1}, \widetilde{\mathbf{X}}_2, \widetilde{\mathbf{X}}_3, \widetilde{\mathbf{X}}_4, \ldots, \widetilde{\mathbf{X}}_{p-1} \}$ as $\mathbf{X}_p+\lambda_p \cdot \mathbf{e}_p$. Finally, it is described a new space of vectors $\{ \mathbf{1}, \widetilde{\mathbf{X}}_2, \widetilde{\mathbf{X}}_3, \widetilde{\mathbf{X}}_4, \ldots, \widetilde{\mathbf{X}}_p \}$.
    \end{description}

    Note that when all the variables are raised, the VIFs and the CN tend to 1 and then it is assured that there is a value for the vector of raised parameter, $(\lambda_{2}, \dots, \lambda_{p})$, that makes the value of the VIFs and CN lower than the established thresholds assuming that the multicollinearity has been mitigated.

\section{Relation between raise and ridge regression}
    \label{MtxMSE}

    In \cite{Garcia2020} it is justified the ridge regression from a geometrical perspective and it is also formally related with the raise regression.  More concretely, that paper shows that ridge estimator, $\widehat{\boldsymbol{\beta }}_{R}(k)$, is a particular case of raising procedures shown in that paper justifying the presence of the well-known constant $k$  on the main diagonal of matrix $\mathbf{X}^{t} \mathbf{X}$.  This is to say, there is a combination of the vector of raised parameters $(\lambda_{2}, \dots, \lambda_{p})$, that verifies $\widetilde{\mathbf{X}}^{t}\widetilde{\mathbf{X}}=\mathbf{X}^{t}\mathbf{X}+k \cdot \mathbf{I}$. % and $\widetilde{\mathbf{X}}^{t} \mathbf{Y} = \mathbf{X}^{t} \mathbf{Y}$.

In this context, see Appendix \ref{appendix_mse}, from Proposition \ref{Proposition 1} it is possible to compare
estimators $\widehat{\boldsymbol{\beta }}$, $\widehat{\boldsymbol{\beta }}%
(\lambda)$, $\widehat{\boldsymbol{\beta }}_{R}(k)$ under the criteria of the
root mean square error matrix and MSE.

\begin{proposition}
\label{Proposition 2} The successive raise estimator, $\widehat{\boldsymbol{%
\beta }}(\lambda)=\left( \widetilde{\mathbf{X}}^{t}\widetilde{%
\mathbf{X}}\right) ^{-1}\widetilde{\mathbf{X}}^{t}\mathbf{Y%
}$ where $\widetilde{\mathbf{X}}^{t}\widetilde{\mathbf{X}}=%
\mathbf{X}^{t}\mathbf{X}+k\mathbf{I}$, is preferred to the OLS
estimator, $\widehat{\boldsymbol{\beta }}=\left( \mathbf{X}^{t}\mathbf{%
X}\right) ^{-1}\mathbf{X}^{t}\mathbf{Y}$, under the root mean square error matrix
criterion for values of\textrm{\ }$k$ that verify the following expression:
\begin{equation*}
\boldsymbol{\beta }^{t}\left[ \left( \widetilde{\mathbf{X}}%
^{t}\widetilde{\mathbf{X}}\right) ^{-1}\widetilde{\mathbf{X%
}}^{t}\mathbf{X}-\mathbf{I}\right] ^{t}\frac{1}{k}\left(
\mathbf{X}^{t}\mathbf{X}+k\mathbf{I}\right) \mathbf{X}^{t}%
\mathbf{X}\left[ \left( \widetilde{\mathbf{X}}^{t}\widetilde{%
\mathbf{X}}\right) ^{-1}\widetilde{\mathbf{X}}^{t}\mathbf{X%
}-\mathbf{I}\right] \boldsymbol{\beta }<\sigma ^{2}.
\end{equation*}
\end{proposition}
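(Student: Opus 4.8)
The plan is to put both competing estimators into the classical variance-plus-squared-bias form of the mean square error matrix and then reduce the matrix comparison to a single scalar inequality by means of the standard rank-one L\"owner-order lemma.

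First I would record the first two moments of each estimator. Since $\widehat{\boldsymbol{\beta}}$ is the OLS estimator of model~(\ref{modelo0}) it is unbiased, so $\text{MSE}(\widehat{\boldsymbol{\beta}}) = \sigma^{2}(\mathbf{X}^{t}\mathbf{X})^{-1}$. For the successive raise estimator, write $\mathbf{A} = (\widetilde{\mathbf{X}}^{t}\widetilde{\mathbf{X}})^{-1}\widetilde{\mathbf{X}}^{t}\mathbf{X}$; then $E[\widehat{\boldsymbol{\beta}}(\lambda)] = \mathbf{A}\boldsymbol{\beta}$, so its bias is $\mathbf{d} := (\mathbf{A}-\mathbf{I})\boldsymbol{\beta}$, and, using the hypothesis $\widetilde{\mathbf{X}}^{t}\widetilde{\mathbf{X}}=\mathbf{X}^{t}\mathbf{X}+k\mathbf{I}$, its variance is $Var(\widehat{\boldsymbol{\beta}}(\lambda)) = \sigma^{2}(\widetilde{\mathbf{X}}^{t}\widetilde{\mathbf{X}})^{-1} = \sigma^{2}(\mathbf{X}^{t}\mathbf{X}+k\mathbf{I})^{-1}$. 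Hence $\text{MSE}(\widehat{\boldsymbol{\beta}}(\lambda)) = \sigma^{2}(\mathbf{X}^{t}\mathbf{X}+k\mathbf{I})^{-1} + \mathbf{d}\mathbf{d}^{t}$, and the raise estimator is preferred to OLS under the mean square error matrix criterion exactly when
$$\boldsymbol{\Delta} := \sigma^{2}\left[ (\mathbf{X}^{t}\mathbf{X})^{-1} - (\mathbf{X}^{t}\mathbf{X}+k\mathbf{I})^{-1} \right] - \mathbf{d}\mathbf{d}^{t}$$
is nonnegative definite.

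Next I would check that $\mathbf{M} := \sigma^{2}\left[ (\mathbf{X}^{t}\mathbf{X})^{-1} - (\mathbf{X}^{t}\mathbf{X}+k\mathbf{I})^{-1} \right]$ is positive definite for every $k>0$ (the matrix $\mathbf{X}^{t}\mathbf{X}$ is smaller than $\mathbf{X}^{t}\mathbf{X}+k\mathbf{I}$ in the L\"owner order, and inversion reverses this order), and compute its inverse in closed form from the elementary identity $\mathbf{C}^{-1}-(\mathbf{C}+k\mathbf{I})^{-1} = \mathbf{C}^{-1}\left[(\mathbf{C}+k\mathbf{I})-\mathbf{C}\right](\mathbf{C}+k\mathbf{I})^{-1} = k\,\mathbf{C}^{-1}(\mathbf{C}+k\mathbf{I})^{-1}$ with $\mathbf{C}=\mathbf{X}^{t}\mathbf{X}$, which yields $\mathbf{M}^{-1} = \frac{1}{k\sigma^{2}}(\mathbf{X}^{t}\mathbf{X}+k\mathbf{I})\mathbf{X}^{t}\mathbf{X}$.

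Finally I would apply the rank-one lemma already invoked in Proposition~\ref{Proposition 1}: for a positive definite $\mathbf{M}$ and a vector $\mathbf{d}$, the matrix $\mathbf{M}-\mathbf{d}\mathbf{d}^{t}$ is positive definite if and only if $\mathbf{d}^{t}\mathbf{M}^{-1}\mathbf{d}<1$. Substituting $\mathbf{d} = (\mathbf{A}-\mathbf{I})\boldsymbol{\beta}$ and the expression for $\mathbf{M}^{-1}$ just obtained, the inequality $\mathbf{d}^{t}\mathbf{M}^{-1}\mathbf{d}<1$ is precisely
$$\boldsymbol{\beta}^{t}\left[ (\widetilde{\mathbf{X}}^{t}\widetilde{\mathbf{X}})^{-1}\widetilde{\mathbf{X}}^{t}\mathbf{X} - \mathbf{I} \right]^{t} \frac{1}{k}\left( \mathbf{X}^{t}\mathbf{X}+k\mathbf{I} \right) \mathbf{X}^{t}\mathbf{X}\left[ (\widetilde{\mathbf{X}}^{t}\widetilde{\mathbf{X}})^{-1}\widetilde{\mathbf{X}}^{t}\mathbf{X} - \mathbf{I} \right]\boldsymbol{\beta} < \sigma^{2},$$
which is the stated condition. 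The only points requiring care are distinguishing strict positive definiteness from mere nonnegative definiteness (the statement uses the strict version, matching the ``if and only if'' of the rank-one lemma) and verifying that $\mathbf{M}$ is invertible, which is exactly where $k>0$ (equivalently $\lambda\neq 0$, so that the raise does not collapse to OLS) is used; the rest is routine matrix algebra.
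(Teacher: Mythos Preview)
Your proof is correct and follows essentially the same route as the paper: both identify $\mathbf{S}=(\mathbf{X}^{t}\mathbf{X})^{-1}-(\mathbf{X}^{t}\mathbf{X}+k\mathbf{I})^{-1}$ as the positive-definite variance difference, invert it, and then apply the Farebrother--Trenkler rank-one criterion recorded in Proposition~\ref{Proposition 1}. Your computation of the inverse via the identity $\mathbf{C}^{-1}-(\mathbf{C}+k\mathbf{I})^{-1}=k\,\mathbf{C}^{-1}(\mathbf{C}+k\mathbf{I})^{-1}$ is slightly more direct than the paper's use of the general $(\mathbf{A}+\mathbf{B})^{-1}$ formula, but the argument is otherwise identical.
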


\begin{proof}
By considering $\mathbf{C}_{1}=\left( \widetilde{\mathbf{X}}^{t}%
\widetilde{\mathbf{X}}\right) ^{-1}\widetilde{\mathbf{X}}%
^{t}=\left( \mathbf{X}^{t}\mathbf{X}+k\mathbf{I}\right)
^{-1}\widetilde{\mathbf{X}}^{t}$ and $\mathbf{C}_{2}=\left(
\mathbf{X}^{t}\mathbf{X}\right) ^{-1}\mathbf{X}^{t}$, and only
replacing $\mathbf{C}_{1}$ and $\mathbf{C}_{2}$, it is possible to show $%
\mathbf{C}_{2}\mathbf{C}_{2}^{t}-\mathbf{C}_{1}\mathbf{C}_{1}^{\prime
}=\left( \mathbf{X}^{t}\mathbf{X}\right) ^{-1}-\left( \mathbf{X}%
^{t}\mathbf{X}+k\mathbf{I}\right) ^{-1}$. It is also verified that
this matrix is positive definite (see \cite{Puntanen2011}). On the other
hand, remembering that the inverse of the sum of two matrices $\mathbf{A}$
and $\mathbf{B}$ is defined as:
\begin{equation}
\left( \mathbf{A}+\mathbf{B}\right) ^{-1}=\mathbf{A}^{-1}-\left( \mathbf{I}+%
\mathbf{A}^{-1}\mathbf{B}\right) ^{-1}\mathbf{A}^{-1}\mathbf{B}\mathbf{A}%
^{-1},  \label{eq: inversa de la suma de matrices}
\end{equation}%
it is possible to show that:
\begin{equation*}
\left( \mathbf{C}_{2}\mathbf{C}_{2}^{t}-\mathbf{C}_{1}\mathbf{C}%
_{1}^{t}\right) ^{-1}=\left[ \left( \mathbf{X}^{t}\mathbf{X}%
\right) ^{-1}-\left( \mathbf{X}^{t}\mathbf{X}+k\mathbf{I}\right) ^{-1}%
\right] ^{-1}=k\left( \mathbf{X}^{t}\mathbf{X}+k\mathbf{I}\right)
^{-1}\left( \mathbf{X}^{t}\mathbf{X}\right) ^{-1}.
\end{equation*}%
Given that $\mathbf{X}^{t}\mathbf{X}+k\mathbf{I}$ and $\mathbf{X}^{t}\mathbf{X}$ are positive definite
commutable matrices and $k$ is positive, then their product will be a
positive definite matrix. Based on the result shown in Proposition \ref%
{Proposition 1}, it is therefore possible to conclude that $\widehat{\boldsymbol{\beta }}(\lambda)$ satisfies the MSE admissibility condition, thus
ensuring an improvement in MSE for $k>0$.
\end{proof}
\begin{proposition}
The ridge estimator $\widehat{\boldsymbol{\beta }}_{R}(k)=\left( \mathbf{X}%
^{t}\mathbf{X}+k\mathbf{I}\right) ^{-1}\mathbf{X}^{t}\mathbf{Y}$
is preferred to the successive raise estimator, $\widehat{\boldsymbol{\beta }%
}(\lambda)=\left( \widetilde{\mathbf{X}}^{t}\widetilde{\mathbf{X}}%
\right) ^{-1}\widetilde{\mathbf{X}}^{t}\mathbf{Y}$ where $%
\widetilde{\mathbf{X}}^{t}\widetilde{\mathbf{X}}=\mathbf{X}%
^{t}\mathbf{X}+k\mathbf{I}$, under the criteria of the root mean
square error matrix for the values of $k$ that verify the following
expression:
\begin{equation*}
\boldsymbol{\beta }^{t}\left[ \left( \mathbf{X}^{t}\mathbf{X}+k%
\mathbf{I}\right) ^{-1}\mathbf{X}^{t}\mathbf{X}-\mathbf{I}\right]
^{t}\frac{1}{k}\left( \mathbf{X}^{t}\mathbf{X}+k\mathbf{I}%
\right) \left[ \left( \mathbf{X}^{t}\mathbf{X}+k\mathbf{I}\right) ^{-1}%
\mathbf{X}^{t}\mathbf{X}-\mathbf{I}\right] \boldsymbol{\beta }<\sigma
^{2}.
\end{equation*}
\end{proposition}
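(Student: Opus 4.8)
The plan is to proceed exactly as in the proof of Proposition~\ref{Proposition 2}, interchanging the roles of the two estimators: now the ridge estimator is the one claimed to dominate and the successive raise estimator is the reference. Write $\widehat{\boldsymbol{\beta}}_{R}(k)=\mathbf{C}_{R}\mathbf{Y}$ with $\mathbf{C}_{R}=\left(\mathbf{X}^{t}\mathbf{X}+k\mathbf{I}\right)^{-1}\mathbf{X}^{t}$, and, under the standing hypothesis $\widetilde{\mathbf{X}}^{t}\widetilde{\mathbf{X}}=\mathbf{X}^{t}\mathbf{X}+k\mathbf{I}$, write $\widehat{\boldsymbol{\beta}}(\lambda)=\mathbf{C}_{1}\mathbf{Y}$ with $\mathbf{C}_{1}=\left(\widetilde{\mathbf{X}}^{t}\widetilde{\mathbf{X}}\right)^{-1}\widetilde{\mathbf{X}}^{t}=\left(\mathbf{X}^{t}\mathbf{X}+k\mathbf{I}\right)^{-1}\widetilde{\mathbf{X}}^{t}$. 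First I would record the two (scaled) variance matrices: because $\widehat{\boldsymbol{\beta}}(\lambda)$ is an ordinary least squares estimator on the raised design $\widetilde{\mathbf{X}}$, one has $\mathbf{C}_{1}\mathbf{C}_{1}^{t}=\left(\widetilde{\mathbf{X}}^{t}\widetilde{\mathbf{X}}\right)^{-1}=\left(\mathbf{X}^{t}\mathbf{X}+k\mathbf{I}\right)^{-1}$, whereas $\mathbf{C}_{R}\mathbf{C}_{R}^{t}=\left(\mathbf{X}^{t}\mathbf{X}+k\mathbf{I}\right)^{-1}\mathbf{X}^{t}\mathbf{X}\left(\mathbf{X}^{t}\mathbf{X}+k\mathbf{I}\right)^{-1}$.

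The next step is the one computation that carries the argument: subtracting these matrices and using $\mathbf{X}^{t}\mathbf{X}=\left(\mathbf{X}^{t}\mathbf{X}+k\mathbf{I}\right)-k\mathbf{I}$ one obtains $\mathbf{C}_{1}\mathbf{C}_{1}^{t}-\mathbf{C}_{R}\mathbf{C}_{R}^{t}=k\left(\mathbf{X}^{t}\mathbf{X}+k\mathbf{I}\right)^{-2}$, which for every $k>0$ is positive definite, being a positive multiple of the square of a positive definite matrix (the analogue of the appeal to \cite{Puntanen2011} made in Proposition~\ref{Proposition 2}). This is precisely the hypothesis needed to invoke the comparison criterion of Proposition~\ref{Proposition 1} (Appendix~\ref{appendix_mse}) with $\widehat{\boldsymbol{\beta}}_{R}(k)$ in the dominating role: I would then feed into that criterion the inverse of the above matrix, namely $\tfrac{1}{k}\left(\mathbf{X}^{t}\mathbf{X}+k\mathbf{I}\right)^{2}$, sandwiched by the bias vector of the ridge estimator, $\left[\left(\mathbf{X}^{t}\mathbf{X}+k\mathbf{I}\right)^{-1}\mathbf{X}^{t}\mathbf{X}-\mathbf{I}\right]\boldsymbol{\beta}$, which reproduces the inequality of the statement. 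As a quick check, the identity $\left(\mathbf{X}^{t}\mathbf{X}+k\mathbf{I}\right)^{-1}\mathbf{X}^{t}\mathbf{X}-\mathbf{I}=-k\left(\mathbf{X}^{t}\mathbf{X}+k\mathbf{I}\right)^{-1}$ lets the left-hand side collapse to a single scalar quadratic form in $\boldsymbol{\beta}$, i.e. a transparent bound on the size of $\boldsymbol{\beta}$ relative to $\sigma^{2}$, in the spirit of the classical ridge admissibility condition.

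The step I expect to be the real obstacle — the one that is only implicit in Proposition~\ref{Proposition 2} — is that here the \emph{reference} estimator $\widehat{\boldsymbol{\beta}}(\lambda)$ is itself biased, so in forming $\text{MSE}\left(\widehat{\boldsymbol{\beta}}(\lambda)\right)-\text{MSE}\left(\widehat{\boldsymbol{\beta}}_{R}(k)\right)$ one must handle the rank-one contribution $\left[(\mathbf{C}_{1}\mathbf{X}-\mathbf{I})\boldsymbol{\beta}\right]\left[(\mathbf{C}_{1}\mathbf{X}-\mathbf{I})\boldsymbol{\beta}\right]^{t}$ of the raise bias. Since it is positive semidefinite it can only help the Löwner comparison, so discarding it already shows that the displayed inequality is \emph{sufficient} for $\widehat{\boldsymbol{\beta}}_{R}(k)$ to dominate $\widehat{\boldsymbol{\beta}}(\lambda)$ in the mean square error matrix sense; retaining it is what lets Proposition~\ref{Proposition 1} return the sharp ``if and only if'' version, and I would state explicitly which of the two is being claimed. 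Everything else — the simplification of $\mathbf{C}_{1}\mathbf{C}_{1}^{t}-\mathbf{C}_{R}\mathbf{C}_{R}^{t}$, its inversion, the substitution into the criterion, and verifying positive definiteness — is routine.
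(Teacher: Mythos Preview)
Your approach is essentially the paper's: set $\mathbf{C}_{R}=(\mathbf{X}^{t}\mathbf{X}+k\mathbf{I})^{-1}\mathbf{X}^{t}$ and $\mathbf{C}_{1}=(\mathbf{X}^{t}\mathbf{X}+k\mathbf{I})^{-1}\widetilde{\mathbf{X}}^{t}$, show that $\mathbf{S}=\mathbf{C}_{1}\mathbf{C}_{1}^{t}-\mathbf{C}_{R}\mathbf{C}_{R}^{t}$ is positive definite for $k>0$, and then invoke Proposition~\ref{Proposition 1}. The only noteworthy difference is \emph{how} $\mathbf{S}$ is computed. The paper first writes $\mathbf{S}=(\mathbf{X}^{t}\mathbf{X}+k\mathbf{I})^{-1}\bigl[\mathbf{I}-\mathbf{X}^{t}\mathbf{X}(\mathbf{X}^{t}\mathbf{X}+k\mathbf{I})^{-1}\bigr]$ and then simplifies the bracket through a lengthy chain of identities obtained from the ``inverse of a sum'' formula \eqref{eq: inversa de la suma de matrices}; you bypass all of that by the one-line substitution $\mathbf{X}^{t}\mathbf{X}=(\mathbf{X}^{t}\mathbf{X}+k\mathbf{I})-k\mathbf{I}$, which yields $\mathbf{S}=k(\mathbf{X}^{t}\mathbf{X}+k\mathbf{I})^{-2}$ immediately. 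Your route is shorter and also makes positive definiteness obvious (positive scalar times the square of a positive definite matrix), whereas the paper argues it separately. Note that your $\mathbf{S}^{-1}=\tfrac{1}{k}(\mathbf{X}^{t}\mathbf{X}+k\mathbf{I})^{2}$ carries a square, while the paper's final line records $\mathbf{S}=k(\mathbf{X}^{t}\mathbf{X}+k\mathbf{I})^{-1}$ and correspondingly the statement displays $\tfrac{1}{k}(\mathbf{X}^{t}\mathbf{X}+k\mathbf{I})$ to the first power; your computation is the correct one, so do not force agreement with the printed exponent. Finally, your closing paragraph about the bias of the reference estimator $\widehat{\boldsymbol{\beta}}(\lambda)$ is a sound observation but not an obstacle: Proposition~\ref{Proposition 1}, as taken from \cite{Trenklar1980}, already allows both $\widehat{\boldsymbol{\beta}}_{1}$ and $\widehat{\boldsymbol{\beta}}_{2}$ to be biased, so no extra argument is needed there.
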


\begin{proof}
By defining $\mathbf{C}_{1}=\left( \mathbf{X}^{t}\mathbf{X}+k\mathbf{I}%
\right) ^{-1}\mathbf{X}^{t}$ and $\mathbf{C}_{2}=\left( \widetilde{%
\mathbf{X}}^{t}\widetilde{\mathbf{X}}\right) ^{-1}%
\widetilde{\mathbf{X}}^{t}=\left( \mathbf{X}^{t}\mathbf{X}%
+k\mathbf{I}\right) ^{-1}\widetilde{\mathbf{X}}^{t}$, it is
possible to show that $$\mathbf{C}_{2}\mathbf{C}_{2}^{t}-\mathbf{C}_{1}%
\mathbf{C}_{1}^{t}=\left( \mathbf{X}^{t}\mathbf{X}+k\mathbf{I}%
\right) ^{-1}\left[ \mathbf{I}-\mathbf{X}^{t}\mathbf{X}\left( \mathbf{X%
}^{t}\mathbf{X}+k\mathbf{I}\right) ^{-1}\right].$$ From
\eqref{eq: inversa de la
suma de matrices}, it is obtained that:
\begin{eqnarray}
\left( \mathbf{X}^{t}\mathbf{X}+k\mathbf{I}\right) ^{-1} &=&\left(
\mathbf{X}^{t}\mathbf{X}\right) ^{-1}-\left[ \mathbf{I}+\left( \mathbf{%
X}^{t}\mathbf{X}\right) ^{-1}k\right] ^{-1}\left( \mathbf{X}^{t}%
\mathbf{X}\right) ^{-1}k\left( \mathbf{X}^{t}\mathbf{X}\right) ^{-1}
\notag \\
&=&\left( \mathbf{X}^{t}\mathbf{X}\right) ^{-1}-\left[ k\left( \mathbf{%
X}^{t}\mathbf{X}\right) ^{-1}+\mathbf{I}\right] ^{-1}\left( \mathbf{X}%
^{t}\mathbf{X}\right) ^{-1}k\left( \mathbf{X}^{t}\mathbf{X}%
\right) ^{-1}.  \notag
\end{eqnarray}%
Then:
\begin{eqnarray}
\left[ \mathbf{I}-\mathbf{X}^{t}\mathbf{X}\left( \mathbf{X}^{t}%
\mathbf{X}+k\mathbf{I}\right) ^{-1}\right]  &=&k\left( \mathbf{X}^{t}%
\mathbf{X}\right) ^{-1}\left[ k\left( \mathbf{X}^{t}\mathbf{X}\right)
^{-1}+\mathbf{I}\right] ^{-1}\left( \mathbf{X}^{t}\mathbf{X}\right)
^{-1}\left( \mathbf{X}^{t}\mathbf{X}\right) ^{-1}  \notag \\
&=&k\left( \mathbf{X}^{t}\mathbf{X}\right) \left( \left( \mathbf{X}%
^{t}\mathbf{X}\right) \left[ k\left( \mathbf{X}^{t}\mathbf{X}%
\right) ^{-1}+\mathbf{I}\right] \right) ^{-1}\left( \mathbf{X}^{t}%
\mathbf{X}\right) ^{-1}  \notag \\
&=&k\left( \mathbf{X}^{t}\mathbf{X}\right) \left[ k\mathbf{I}+\left(
\mathbf{X}^{t}\mathbf{X}\right) \right] ^{-1}\left( \mathbf{X}^{\prime
}\mathbf{X}\right) ^{-1}  %\notag \\
= k\left( \mathbf{X}^{t}\mathbf{X}\right) \left( \left( \mathbf{X}%
^{t}\mathbf{X}\right) \left[ k\mathbf{I}+\left( \mathbf{X}^{t}%
\mathbf{X}\right) \right] \right) ^{-1}  \notag \\
&=&k\left( \mathbf{X}^{t}\mathbf{X}\right) \left[ \left( \mathbf{X}%
^{t}\mathbf{X}\right) ^{2}+k\left( \mathbf{X}^{t}\mathbf{X}%
\right) \right] ^{-1}  %\notag \\
= k\left( \left[ \left( \mathbf{X}^{t}\mathbf{X}\right) ^{2}+k\left(
\mathbf{X}^{t}\mathbf{X}\right) \right] \left( \mathbf{X}^{t}%
\mathbf{X}\right) ^{-1}\right) ^{-1}  \notag \\
&=&k \left( \mathbf{X}^{t} \mathbf{X} +k\mathbf{I}\right)^{-1}.  \notag
\end{eqnarray}%
In conclusion, it is obtained that $\mathbf{C}_{2}\mathbf{C}_{2}^{t}-%
\mathbf{C}_{1}\mathbf{C}_{1}^{t}=k \left( \mathbf{X}^{t}%
\mathbf{X} +k\mathbf{I}\right)^{-1}$ will be a positive definite
matrix if $k>0$. Based on Proposition \ref{Proposition 1}, we conclude that,
in relation to the $MtxMSE$\ and MSE criteria, the ridge estimator is
preferred to estimator $\widehat{\boldsymbol{\beta }}(\lambda)$\ for $k\in
(0,+\infty )$.
\end{proof}

\section{Empirical examples}
    \label{example}

To illustrate the empirical application of the raise regression, we consider two data sets.  The first one is used to show the estimation of the raise regression and compare it with the ridge and residualization regressions. It is also shown that the raise and residualization regressions maintains all the global characteristics of the OLS regression while, at the same time, mitigate the multicollinearity. The second example is used to illustrate the usefulness of the raise regression to mitigate the non-essential multicollinearity.

\subsection{Example 1: Portland cement dataset}

The regression model for these data is defined as:
\begin{equation}
    \mathbf{Y} =\beta_1+\beta_2 \mathbf{X}_{2}+\beta_3 \mathbf{X}_{3}+ \beta_4 \mathbf{X}_{4}+\beta_5 \mathbf{X}_{5}+ \mathbf{u},
\end{equation}
being $\mathbf{Y}$ the heat evolved after 180 days of curing measured in calories per gram of cement, $\mathbf{X}_{2}$ the tricalcium aluminate, $\mathbf{X}_{3}$ the tricalcium silicate, $\mathbf{X}_{4}$ the tetracalcium aluminoferrite and $\mathbf{X}_{5}$ the $\beta$-dicalcium silicate.

This example is originally adopted by \cite{woods1932effect} and also analyzed by, among others, \cite{kacciranlar1999new}, \cite{li2012new} and, recently, by \cite{lukman2019modified} and \cite{kibria2020new}.

Table \ref{correxample1} presents the main measures in relation to the diagnose of multicollinearity: the correlation matrix of the variables, the coefficients of variation, the variance inflation factors obtained from OLS and the condition number (obtained from the matrix $\mathbf{X}^{t} \mathbf{X}$ transformed to present unit length columns\footnote{Note that \cite{kibria2020new} obtained a condition number approximately equal to 424 directly dividing the maximum eigenvalue of the original data (44676.206) by the minimum eigenvalue of the original data (105.419) and with out calculate the square root. \cite{kacciranlar1999new} obtained a condition number equal to 20.58 by using the original data and calculating the square root.}). All these measures indicate the presence of a troubling degree of essential multicollinearity in this model.

\begin{table}%[H]
\begin{center}
\begin{tabular}{cccccc}
  \hline
  &         Y&$\mathbf{X}_{2}$&        $\mathbf{X}_{3}$  &      $\mathbf{X}_{4}$&         $\mathbf{X}_{5}$\\
  Y             &1.0000000  &0.7307175  &0.8162526  &-0.5346707  &-0.8213050\\
$\mathbf{X}_{2}$&0.7307175  & 1.0000000 & 0.2285795 &-0.8241338  &-0.2454451\\
$\mathbf{X}_{3}$&0.8162526  & 0.2285795 &1.0000000  &-0.1392424  &-0.9729550\\
$\mathbf{X}_{4}$&-0.5346707 &-0.8241338 &-0.1392424 & 1.0000000  & 0.0295370\\
$\mathbf{X}_{5}$&-0.8213050 &-0.2454451 &-0.9729550 &  0.0295370 &1.0000000\\
\hline
VIF &&38.49621&254.42317&46.868399&282.5128\\
\hline
CV&&0.7574338&0.3104718&0.5228758&0.5360508\\
CN&249.5783&&&&\\
\hline
\end{tabular}
\caption{Main measures in relation to the diagnose of multicollinearity for Portland cement dataset}\label{correxample1}
\end{center}
\end{table}

Results of the estimation by OLS are provided in the first column of Table \ref{resultsexample1}. Note that only the regression coefficient associated to variable $\mathbf{X}_{2}$ is significant at the level of confidence of 90\%. In contrast, the coefficient of determination is 98.24\% and the model is globally significant at the level of confidence of 99\%. In fact, \cite{kacciranlar1999new} considers that this is an example of an ``anti-quirk'' due to it is verified that $R^2< \sum_{i=2}^{5} r^2_{x_i,y}$ being $r^2_{x_i,y}$ the simple correlation between $x_{i}$ and $y$ with $i=2,3,4,5$. This fact is also considered as a symptom of multicollinearity.

\begin{table}%[H]
\begin{center}
\begin{tabular}{cc|cccc}
  \hline
  % after \\: \hline or \cline{col1-col2} \cline{col3-col4} ...
   & OLS& &RIDGE ($k=0.0077$)& RIDGE ($k=0.0015$) & RIDGE ($k=20.6$)\\
   \hline
  $\hat{\beta}_1$& 62.4054 (70.071) &$\hat{\beta}_1(k)$& 8.5642 & 27.9917& \\

  $\hat{\beta}_2$  & 1.5511 (0.7448)*&$\hat{\beta}_2(k)$ & 2.1048 &1.9051&1.6757 \\

  $\hat{\beta}_3$ & 0.5102 (0.7238) &$\hat{\beta}_3(k)$& 1.0651 & 0.8648&1.8592\\

  $\hat{\beta}_4$ & 0.1019 (0.7547)&$\hat{\beta}_4(k)$& 0.6683 & 0.4640&-1.211\\

  $\hat{\beta}_5$  & -0.1441 (0.7091)&$\hat{\beta}_5(k)$&0.3998 &0.2036&-1.8771\\

  \hline
  $R^2$& 0.9824&&&\\
  $F_{statistic}$&111.5***&&&\\
  MSE& 4912.09&&2991.8&2171.295&1711.66\\
  \hline
\end{tabular}
\caption{OLS and ridge estimations for Portland cement dataset}\label{resultsexample1}
\end{center}
\end{table}

Due to the existence of multicollinearity, Table \ref{resultsexample1} presents also the estimation by ridge regression with the value of $k$ proposed by \cite{HoerlKennardBaldwin} ($k=0.0077$) and the value of $k$ that minimizes the MSE ($k=0.0015$). The last column presents the estimation by ridge regression with standardized data with the value of $k$ that minimizes the MSE ($k=20.6$). %Note that the standardization consists in transforming the data subtracting its mean and dividing by $\sqrt{n-1}S_{n-1}$ being $S_{n-1}$ the quasi standard  deviation.

\begin{table}%[H]
\begin{center}
\begin{tabular}{cccccc}
  \hline
Raised& Variable $\mathbf{X}_{2}$& Variable $\mathbf{X}_{3}$& Variable $\mathbf{X}_{4}$& Variable $\mathbf{X}_{5}$\\
 \hline
$\lim_{\lambda\rightarrow\infty}VIF(X_2,\lambda)$& 1        &3.6781 &1.06633    &3.2510\\
$\lim_{\lambda\rightarrow\infty}VIF(X_3,\lambda)$&24.3091   &1      &18.78031    &1.06357\\
$\lim_{\lambda\rightarrow\infty}VIF(X_4,\lambda)$&1.298236   &3.4596 &1          &3.1421\\
$\lim_{\lambda\rightarrow\infty}VIF(X_5,\lambda)$&23.8586   &1.1810 &18.9400   &1\\
 \hline
 $\lim_{\lambda\rightarrow\infty} K( \widetilde{\mathbf{X}},\lambda)$  & 61.01328   &  15.30453  &  51.51909 & 13.65739  \\
 %$\lim_{\lambda\rightarrow\infty}\left\| \widehat{\boldsymbol{\beta}}(\lambda) \right\|$& 41475.42  & 12475.09 & 5135.817&2325.996 \\
\hline
$\lambda_{min}$  &0.230547&2.01277&54.8438&24.2248\\
$MSE \left( \widehat{\boldsymbol{\beta}}(\lambda_{min}) \right)$&4050.054&1645.21&308.5229&211.2751\\
%$\left\| \widehat{\boldsymbol{\beta}}(\lambda_{min}) \right\|$& 7899.106  & 9089.164 & 5112.133 &2380.589 \\
\hline
\end{tabular}
\caption{Horizontal asymptotes for VIF an d CN after raising each variable and $\lambda_{min}$ (and its MSE) for Portland cement dataset}\label{landaexample1}
\end{center}
\end{table}

Table \ref{landaexample1} shows the horizontal asymptote for VIFs and CN after raising each variable and also the value of $\lambda_{min}$, the $\lim_{\lambda\rightarrow\infty} K( \widetilde{\mathbf{X}},\lambda)$ and the $MSE \left( \widehat{\boldsymbol{\beta}}(\lambda_{min}) \right)$. % and the $\left\| \widehat{\boldsymbol{\beta}}(\lambda_{min}) \right\|$.
In relation to the selection of the variable to be raised, note that:
\begin{itemize}
    \item When variable $\mathbf{X}_{2}$ or $\mathbf{X}_{4}$ are raised, there is an asymptote higher than 10 for the VIFs of variables $\mathbf{X}_{3}$ and $\mathbf{X}_{5}$. However, the thresholds that would be obtained for VIFs by raising variables $\mathbf{X}_{3}$ or $\mathbf{X}_{5}$ are in all cases less than 10.
    \item In relation to $\lambda_{min}$ in all cases the value is higher than one except for the case when the second variable is raised. Note that when $\lambda_{min} > 1$  it is verified that $MSE(\widehat{\boldsymbol{\beta}}(\lambda))< MSE(\widehat{\boldsymbol{\beta}})$ for all $\lambda>0$. For this reason, and taking into account the asymptotic behaviour of the VIF, the variables more appropriate to be raised will be the variables $\mathbf{X}_{3}$ and $\mathbf{X}_{5}$.
    \item The value of $\lim_{\lambda\rightarrow\infty} K( \widetilde{\mathbf{X}},\lambda)$ is lower than 20 only when variables  $\mathbf{X}_{3}$ and $\mathbf{X}_{5}$ are raised, and for this reason these variables could be preferred to be selected.
    \item Note that $MSE(\hat{\beta}(\lambda))$ is always less than the one obtained by OLS even for the variable $\mathbf{X}_{2}$ where $\lambda_{min}<1$ due to it is verified that $\lambda_{min} \leq 2 \cdot \frac{\widehat{\sigma}^{2}}{\widehat{\beta}_{2}^{2} \cdot \mathbf{e}_{2}^{t} \mathbf{e}_{2} - \widehat{\sigma}^{2}}$. The lowest MSE is obtained when variable $\mathbf{X}_{5}$ is raised.
    %\item \textcolor[rgb]{1.00,0.00,0.00}{Taking into account that $\left\| \widehat{\boldsymbol{\beta}}(\lambda) \right\|$ is equal to 3897.127 when $\lambda=0$, the norm increases as $\lambda$ increases for each variables that is raised except for the variable $X_5$. In this case, the norm diminishes when $\lambda$ increases.}  % lo he quitado porque aquí se supone que estamos eligiendo la variable a alzar y no se hace comentario al respecto
\end{itemize}

However, with a methodological purpose Table \ref{resultsraiseexample1} shows the results of the raise estimation raising each variable with its $\lambda_{min}$:
\begin{itemize}
    \item It is also possible to check that there is not non-essential multicollinearity has been mitigated although initially it was not worrisome, since the coefficient of variation of the raised variables, CV($\tilde{X}_i$), has increased.
    \item It also obtained that the estimated variances of the estimated coefficients diminishes when raise regression is applied.
    \item Raise regression provides results in relation to the inference of the model which allows to conclude that estimated coefficient of variable $\mathbf{X}_4$ is not individually significant in any case, while estimated coefficient of variable $\mathbf{X}_3$ is individually significant when variables $\mathbf{X}_4$ and $\mathbf{X}_5$ are raised. The estimated coefficient of variable $\mathbf{X}_5$ is individually significant when variable $\mathbf{X}_3$ is raised. Then, the raised models presents more estimated coefficients indvidually significant than the OLS estimation.
    %\item Taking into account the matrix of simple correlations, we observed that OLS does not provide the expected signs of estimated coefficients for variables $\mathbf{X}_3$ and $\mathbf{X}_4$ contrary to what happens with the estimation raising variable 3. ESTO LO HE QUITADO PORQUE EN OLS LOS COEFICIENTES NO SON SIGNIFICATIVAMENTE DISTINTOS DE CERO, POR LO QUE NO SE INCURRE EN CONTRADICCIÓN ALGUNA
\end{itemize}

Taking into account the values obtained for the VIFs, CN, MSE and the improvement obtained in the individual significant test, the variable $\mathbf{X}_{5}$ should be selected to be raised. However, over again with a illustrative purpose, the variables $\mathbf{X}_{3}$ and $\mathbf{X}_{5}$ will be raised with two alternatives for the selection of $\lambda$ :
\begin{enumerate}[a)]
    \item Table \ref{resultsraiseexample1vifmenor10} presents the raise estimation by using the value of $\lambda$ that get that all VIFs will be lower than 10. Note that in this case we obtain MSE lower than the one of OLS and also lower than the ones obtained from ridge regression for $k=0.0077$ and $k=0.0015$ and for the one presented by \cite{kibria2020new} in Table 10 (MSE=2170,96), who uses this same example to illustrate a new Ridge-Type Estimator. However, the CN is higher than the established thresholds.
    \item Table \ref{resultsNCexample1} presents the raise estimation raising variable $\mathbf{X}_3$ and $\mathbf{X}_5$ by using the first value of $\lambda$ that get that the CN will be lower than 20. Note that in both cases, the VIFs are lower than 4. In relation to MSE, the model obtained when raising variable $\mathbf{X}_3$ presents a MSE higher than the one obtained for the value of $\lambda$ that makes the VIF lower than 10 while the  model obtained when raising variable $\mathbf{X}_5$ presents a MSE lower than the one obtained for the value of $\lambda$ than makes the VIF lower than 10. The conclusions obtained from these results will be very similar to the one obtained from results presented in Table \ref{resultsraiseexample1}.
%    \item  Table \ref{resultsnorm} presents the raise estimation raising variable $\mathbf{X}_3$ and $\mathbf{X}_5$ by using the value of $\lambda$ that stabilizes the norm of the estimators. The norm of the estimators is equal to 62.4269 for $\lambda=0$ and when $\lambda$ tends to infinite, the norm takes the values 111.69 and 48.22 when variables $\mathbf{X}_3$ and $\mathbf{X}_5$ are raising, respectively. Note that when raising variable $\mathbf{X}_3$ the norm diminishes as $\lambda$ increases while when raising variable $\mathbf{X}_5$ the norm increases as $\lambda$ increases. We have selected the value of $\lambda$ that verifies $norm(\beta_i)-norm(\beta_{i-1}<0.01$ considering a jump of 0.1 for $\lambda$. Note that results are similar to the one presented in Table \ref{resultsNCexample1}  although for the case when variable $\mathbf{X}_5$ is raising the NC is higher than 20.

\end{enumerate}

Finally, Table \ref{resultsorthogonalexample1} shows the estimation of the residualization when variables $\mathbf{X}_3$ and $\mathbf{X}_5$ are residualized. Note that the value of the coefficient of determination is the same than in OLS and raise regression while  the MSE is much lower than the one obtained from OLS, ridge or raise regression. In all cases, the values for the VIFs and the CN are lower than the established thresholds. When variable $\mathbf{X}_3$ is residualized, the estimated coefficient of variable $\mathbf{X}_4$ becomes individually significant for a 90\% level of confidence. However, when variable $\mathbf{X}_5$ is residualized, the estimated coefficients of variables $\mathbf{X}_4$ and $\mathbf{X}_5$ are not individually significant.
Note also that the coefficient of variation (CV) is very high (from a theoretical point of view should be infinite) and, consequently, the non-essential multicollinearity is being mitigated. However, despite of these results and following \cite{Garcia2019} and \cite{GarciaSalmeron_York} it is relevant to highlight that residualization should be applied only if the residualized variable is interpretable. Thus,
\begin{itemize}
    \item If the variable $X_3$ is residualized, its estimated coefficient is interpreted as the part of tricalcium silicate not related to the tricalcium aluminate, the tetracalcium aluminoferrite and the $\beta$-dicalcium silicate.
    \item If the variable $X_5$ is residualized, its estimated coefficient is interpreted as the part  $\beta$-dicalcium silicate not related to the tricalcium aluminate, the tricalcium silicate and the tetracalcium aluminoferrite.
\end{itemize}
In this case, it can be a very complicated task for non-experts in the field.

\begin{table}%[H]
\begin{center}
\begin{tabular}{ccccccccc}
  \hline
  % after \\: \hline or \cline{col1-col2} \cline{col3-col4} ...
  Raised&Variable 2& &Variable 3& &Variable 4& &Variable 5&\\
  &($\lambda$=0.230547)&VIF&($\lambda$=2.01277)&VIF&($\lambda$=54.8438)&VIF&($\lambda$=24.2248)&VIF\\
  \hline
  $\hat{\beta}_1(\lambda)$&88.8665& &95.3277& &71.482793& &48.7570& \\
                  &(57.8629)&&(23.6760)***&&(15.0332)***&&(4.9829)***&\\
  $\hat{\beta}_2(\lambda)$ &1.2605&25.762&  1.2176&7.514& 1.4537&1.078 &1.6901&3.306 \\
                  &(0.6052)**&&(0.3290)***&&(0.1246)***&&(0.2182)***&\\
  $\hat{\beta}_3(\lambda)$ &0.2416 &176.275&0.1693 &28.92 & 0.4177&18.856 & 0.6510&1.462\\
                 &(0.6025)&&(0.2402)&&(0.197041)**&&(0.0548)***&\\
  $\hat{\beta}_4(\lambda)$ &-0.1885 &31.393& -0.2401&8.242 & 0.001825&1.015 & 0.2441&3.211\\
                 &(0.6177)&&(0.3165)&&(0.013515)&&(0.1975)&\\
  $\hat{\beta}_5(\lambda)$&-0.4088 &194.673 & -0.4773&32.176 & -0.2348&19.025 &-0.0057&1.442\\
                  &(0.5886)&  &(0.2393)**&&(0.1840)&&(0.0281)&\\
                  \hline
  $R^2$& 0.9824 &&0.9824&&0.9824&&0.9824&\\
  $F_{statistic}$& 111.5*** &&111.5***&&111.5***&&111.5***&\\
  $MSE$&4050.054& &1645.21&&308.5229&&211.2751&\\
  $CN$&206.301&&83.897&&52.119&&16.97&\\
    CV($\tilde{X}_i$)&0.79361&&0.32824&&4.4718&&1.0056&\\
  \hline
\end{tabular}
\caption{Raise estimation with the $\lambda_{min}$ of each variable for Portland cement dataset}\label{resultsraiseexample1}
\end{center}
\end{table}

\begin{table}%[H]
\begin{center}
\begin{tabular}{ccccc}
  \hline
  % after \\: \hline or \cline{col1-col2} \cline{col3-col4} ...
  Raised& Variable 3& &Variable 5&\\
  &($\lambda=4.65$)&VIF&($\lambda=4.595$)&VIF\\
  \hline
  $\hat{\beta}_1(\lambda)$ & 102.9625 && 50.7337 &\\
                  &(13.2350)***&&(13.16963)***&\\
  $\hat{\beta}_2(\lambda)$  & 1.1402 &4.768874&1.67& 4.3769\\
                  &(0.2621)***&&(0.2511)***& \\
  $\hat{\beta}_3(\lambda)$  & 0.0903 & 8.9387&0.6307&9.1570\\
                 &(0.1281)& &(0.1373)***&\\
  $\hat{\beta}_4(\lambda)$ & -0.3194& 4.8194& 0.22355& 4.5389\\
                 &(0.2420)&&(0.23486)& \\
  $\hat{\beta}_5(\lambda)$  & -0.5545& 9.9939&-0.02575&9.9928\\
                  &(0.1334)***& &(0.12673)&\\
  \hline
  $R^2$&0.9824&&0.9824&\\
  $F_{statistic}$& 111.5*** &&111.5***&\\
  MSE&1820.895&&309.8762&\\
  CN&46.358&&46.395&\\
  CV($\tilde{X}_i$)&0.34222&&0.58710&\\
    \hline
\end{tabular}
\caption{Raise estimation raising variable $\mathbf{X}_{3}$ and variable $\mathbf{X}_{5}$ with the first value of $\lambda$ that makes all VIFs lower than 10}\label{resultsraiseexample1vifmenor10}
\end{center}
\end{table}
%NC&46.35802&&46.39502&\\lo quito, posibilidad de calcular el NC en el ridge para comprobarlo.

 \begin{table}%[H]
\begin{center}
\begin{tabular}{ccccc}
  \hline
  % after \\: \hline or \cline{col1-col2} \cline{col3-col4} ...
  Raised & Variable 3&& Variable 5 &\\
     &  ($\lambda=19.46$) & VIF & ($\lambda=15.93$) &  VIF\\
  \hline
  $\hat{\beta}_1(\lambda)$  & 109.27585 && 49.033075&\\
                  &(5.8073)***&&(5.8489)***&\\
  $\hat{\beta}_2(\lambda)$  & 1.07626 &3.761&1.687338& 3.374\\
                  &(0.2328)***& &(0.22049)***&\\
  $\hat{\beta}_3(\lambda)$   & 0.02493 &1.605& 0.648247&1.948\\
                 &(0.03538)& &(0.0063325)***&\\
  $\hat{\beta}_4(\lambda)$ & -0.38502 &3.563& 0.241269& 3.295\\
                 &(0.20810)& &(0.2001)&\\
  $\hat{\beta}_5(\lambda)$ & -0.618428 &1.853&-0.008509&1.982\\
                  &(0.05743)***&&(0.041881)&\\
                  \hline
  $R^2$&0.9824&&0.9824&\\
  $F_{statistic}$& 111.5*** &&111.5***&\\
  MSE&2231.592&&213.198&\\
  CN&19.9942&&19.99037&\\
  CV($\tilde{X}_i$)&0.5252&&0.79122&\\
  \hline
\end{tabular}
\caption{Raise estimation raising variable $\mathbf{X}_{3}$ and $\mathbf{X}_{5}$ with the first value of $\lambda$ that makes NC lower than 20}\label{resultsNCexample1}
\end{center}
\end{table}

%\begin{table}%[H]
%\begin{center}
%\begin{tabular}{ccccc}
%  \hline
%  % after \\: \hline or \cline{col1-col2} \cline{col3-col4} ...
%  Raised & Variable 3&& Variable 5 &\\
%     &  ($\lambda=21.4$) & VIF & ($\lambda=11$) &  VIF\\
%  \hline
%  $\hat{\beta}_1$  & 109.484 && 49.3779&\\
%                  &(5.6383)***&&(7.14965)***&\\
%  $\hat{\beta}_2$  & 1.07414 &3.748&1.6838& 3.496\\
%                  &(0.232374)***& &(0.2244)***&\\
%  $\hat{\beta}_3$  &0.02277 &1.505& 0.644686&2.823\\
%                 &(0.03231)& &(0.07624)***&\\
%  $\hat{\beta}_4$ & -0.3871 &3.546& 0.2376& 3.446\\
%                 &(0.2075)*& &(0.2046)&\\
%  $\hat{\beta}_5$ &-0.62053 &1.742&-0.0120&2.955\\
%                  &(0.05567)***&&(0.05908)&\\
%                  \hline
%  $R^2$&0.9824&&0.9824&\\
%  $F_{statistic}$& 111.5*** &&111.5***&\\
%  MSE&2249.26&&221.004&\\
%  CN&19.463&&24.649&\\
%  CV($\tilde{X}_i$)&0.55674&&0.68474&\\
%  \hline
%\end{tabular}
%\caption{Raise estimation raising variable $\mathbf{X}_{3}$ and $\mathbf{X}_{5}$ with the value of $\lambda$ that stabilizes the norm of the estimators} \label{resultsnorm}
%\end{center}
%\end{table}

\begin{table}%[H]
\begin{center}
\begin{tabular}{ccccc}
  \hline
  % after \\: \hline or \cline{col1-col2} \cline{col3-col4} ...
  Residualized & Variable 3& VIF & Variable 5& VIF \\
  \hline
  $\widehat{\boldsymbol{\gamma_1}}$ & 111.684 && 48.1936&\\
                  &(4.6956)***&&(4.14)***&\\
  $\widehat{\boldsymbol{\gamma_2}}$  & 1.05185 &3.678&1.6959& 3.251\\
                  &(0.2302)***& &(0.2164)***&\\
  $\widehat{\boldsymbol{\gamma_3}}$  & 0.5101 &1& 0.6569&1.064\\
                 &(0.72379)& &(0.0468)***&\\
  $\widehat{\boldsymbol{\gamma_4}}$ & -0.41 &3.46& 0.25&3.142\\
                 &(0.2050)*& &(0.1954)&\\
  $\widehat{\boldsymbol{\gamma_5}}$  & -0.6428 &1.181&-0.1441&1\\
                  &(0.04584)***& &(0.7090)&\\
                  \hline
  $R^2$&0.9824&&0.9824&\\
  $F_{statistic}$& 111.5*** &&111.5***&\\
  MSE&22.66984&&17.72962&\\
  CN&15.305&&13.657&\\
  CV($\tilde{X}_i$)&3.56$\cdot 10^{15}$&&7.25$\cdot 10^{14}$&\\
  \hline
\end{tabular}
\caption{Residualization for Portland cement dataset residualizing variable $\mathbf{X}_{3}$ and $\mathbf{X}_{5}$}\label{resultsorthogonalexample1}
\end{center}
\end{table}
%  MSE&22.6984&&17.7296&\\ comprobar

\subsection{Example 2: Euribor data}

The following regression model presents an analysis of the Euribor (\textbf{100\%}):
 \begin{equation}\label{Finanzas}
       \mathbf{Euribor}= \beta_{1}+\beta_{2}\mathbf{HICP}+\beta_{3}\mathbf{BC}+\beta_{4}\mathbf{GD}+\mathbf{u},
 \end{equation}
 in relation to the Harmonized Index of Consumer Prices (\textbf{100\%}) ($\mathbf{HICP}$), the Balance of Payments to net current account ($\mathbf{BC}$) in millions of euros and the Government Deficit to net non-finacial accounts ($\mathbf{GD}$) expressed in millions of euros. It is supposed that the random disturbance $\mathbf{u}$ is centered, homoscedastic and uncorrelated.  This dataset, previously used by \cite{Salmeron2020a} and \cite{Salmeron2020centered}, is composed by 47 Eurozone observations for the period January 2002 to July 2013 (quarterly and seasonally adjusted data).

 Table \ref{correxample2} presents the main measures in relation to the diagnose of multicollinearity: the correlation matrix of the predictors, the coefficients of variation, the VIFs obtained from OLS and the CN (obtained from the matrix $\mathbf{X}^{t}\mathbf{X}$ transformed to present unit length columns). Note that the coefficient of variation of the variable $\mathbf{HICP}$ is equal to 0.07033 which is lower than the threshold 0.1002506 provided by \cite{Salmeron2020a} to consider that the variable shows such a slight variability that it is highly related to the intercept provoking worrying  non-essential multicollinearity. The existence of this king o f worrying  multicollinearity is also suggested by the fact that the values of VIFs are lower than 10 while the CN is higher than 30. Remember that the VIF is not able to detect the non-essential multicollinearity contrary to what happens with the CN.

 \begin{table}%[H]
\begin{center}
\begin{tabular}{cccc}
  \hline
  &        $\mathbf{HICP}$&         $\mathbf{BC}$   &      $\mathbf{GD}$\\
$\mathbf{HICP}$ & 1 & 0.23129 &-0.4685 \\
$\mathbf{BC}$ & 0.23129 &1 &-0.07033 \\
 $\mathbf{GD}$ &-0.4685 &-0.07033 & 1 \\
\hline
VIF &1.351&1.059&1.285\\
\hline
CV&0.07033&4.38723&-0.556102\\
CN&39.35375&&\\
\hline
\end{tabular}
\caption{Main measures in relation to the diagnose of multicollinearity for Euribor dataset}\label{correxample2}
\end{center}
\end{table}

The traditional solution for the non-essential multicollinearity is to center the variable which was applied by \cite{Salmeron2020a}.
Alternatively, it will be possible to apply the raise regression treating to minimizes the MSE and/or mitigate multicollinearity raising the variable that provoke it (in this case $\mathbf{HICP}$). Table \ref{resultsexample2} presents the estimation by OLS (second column), the estimation raising the variable $\mathbf{HICP}$ with the first value of $\lambda$ that makes the CV higher than $0.1002506$ (third column), the first value of $\lambda$ that provides a CN lower than 20  (fourth column). the value of $\lambda$ that minimizes the mean square error (fifth column) and the value of $\lambda$ that stabilizes the norm of the estimators (sixth column). Figure \ref{norma} shows the relation between $\left\| \widehat{\boldsymbol{\beta}}(\lambda) \right\|$ and $\lambda$ (varying between 0 and 40, with jumps of 0.1) and Figure \ref{diferencianorma} shows the difference between the norm of the estimators calculated for a given $\lambda$ and for its immediately preceding one. That difference is lower than 0.001 from $\lambda=12.2$.

Note that the different models are similar in relation to individual inference and in all cases the VIFs are close to 1. In relation to MSE, when raising the variable $\mathbf{HICP}$
 with the value of $\lambda$ that minimizes the MSE, the model provides a reduction of 95.7157\% in the MSE in comparison with the MSE obtained by OLS. This reduction is also relevant (84.36\%) when it is compared to the MSE, 0.4339, obtained by \cite{Salmeron2020a} where the same model is estimated but centering variable $\mathbf{HICP}$. Note that the condition number is lower 20 under the criteria of minimising the MSE and the criteria of stabilizing the norm of the estimators. This last criteria requires a lower value for $\lambda$ and also presents a reduction of the MSE very relevant (95.4661\%).

%\begin{table}%[H]
%\begin{center}
%\begin{tabular}{ccccc}
%  \hline
%  % after \\: \hline or \cline{col1-col2} \cline{col3-col4} ...
%   & OLS& OLS ($\mathbf{HICP}$ centered)& RAISE($\lambda=0.981$) &RAISE($\lambda=37.3703$)\\
%  \hline
%  $\hat{\beta}_1$ & 4.376 &    4.15839                 & 4.275&4.177 \\
%                  &(1.258)***& (0.183782)***             &(0.6509)***&(0.1683)*** \\
%  $\hat{\beta}_2$  & -0.002  &  -0.002            & 0.001042&-5.38$\cdot 10^{-5}$  \\
%                  &(0.013)&         (0.013)        &(0.00637)&(3.289$\cdot 10^{-4}$) \\
%  $\hat{\beta}_3$ & -3.6472$\cdot 10^{-5}$  &-3.647$\cdot 10^{-5}$  &-3.655$\cdot 10^{-5}$&-3.661$\cdot 10^{-5}$  \\
%                 &(3.897 $\cdot 10^{-6}$)***&(3.897 $\cdot 10^{-6}$)***  &(-3.823$\cdot 10^{-5}$)***&(3.797$\cdot 10^{-6}$)***  \\
%  $\hat{\beta}_4$  & 1.971 $\cdot 10^{-5}$ &1.971 $\cdot 10^{-5}$  &1.980$\cdot 10^{-5}$&1.988$\cdot 10^{-5}$  \\
%                 &(2.202 $\cdot 10^{-6}$)***& (2.202 $\cdot 10^{-6}$)*** &(2.016$\cdot 10^{-6}$)***&(1.948$\cdot 10^{-6}$)*** \\
%  \hline
%  $R^2$& 0.8314&0.8314&0.8314&0.8314\\
%  $F_{statistic}$&70.68***&70.68***&70.68***&70.68***\\
%  CN&39.35375&5.329&19.99&4.2555\\
%  MSE& 1.582988&0.4339&0.4622&0.06782\\
%  \hline
%\end{tabular}
%\caption{OLS, OLS with $\mathbf{HICP}$ variable centered and raise estimations (with $\lambda$ that makes $NC<20$ and $\lambda_{min}$) for Euribor dataset} \label{resultsexample2}
%\end{center}
%\end{table}
\begin{figure}
  \centering
  % Requires \usepackage{graphicx}
  \includegraphics[width=12cm]{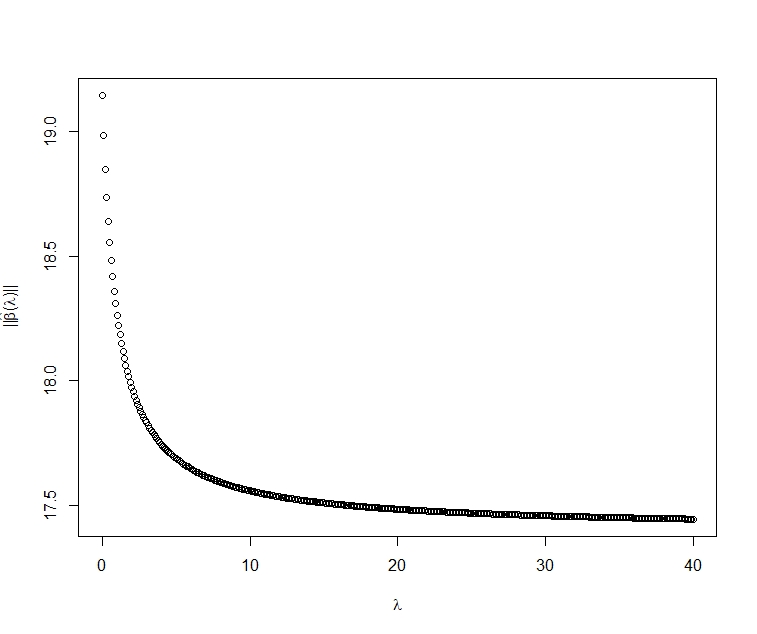}
  \caption{Relation between $\left\| \widehat{\boldsymbol{\beta}}(\lambda) \right\|$ and raising factor ($0<\lambda<40$).}\label{norma}
\end{figure}
\begin{figure}
  \centering
  % Requires \usepackage{graphicx}
  \includegraphics[width=12cm]{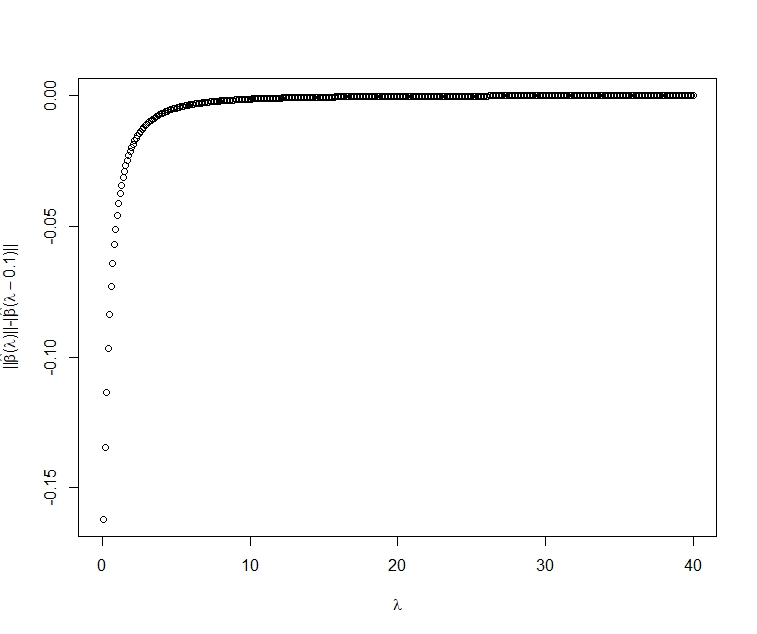}
  \caption{Behavior of the difference between $\left\| \widehat{\boldsymbol{\beta}}(\lambda) \right\|$  and $\left\| \widehat{\boldsymbol{\beta}}(\lambda - 0.1) \right\|$ in relation to raising factor  ($0<\lambda<40$).  }\label{diferencianorma}
\end{figure}

%\rotating
\begin{sidewaystable}
\small{\centering
% \begin{turn}{90}
\begin{tabular}{ccccccccccc}
  \hline
  % after \\: \hline or \cline{col1-col2} \cline{col3-col4} ...
   & OLS&VIF& RAISE ($\lambda=0.55$)&VIF& RAISE ($\lambda=0.981$) &VIF & RAISE ($\lambda=37.3703$)&VIF&RAISE ($\lambda=12.2$)&VIF\\
  \hline
  $\hat{\beta}_1$ & 4.376 &  &                & & &&&&& \\
                  &(1.258)***& &           &   &&&&&&\\
  $\hat{\beta}_1(\lambda)$ & &  &  4.158                & & 4.275&&4.177&&4.1870& \\
                          && & (0.1837)***          &   &(0.6509)***&&(0.1683)*** &&(0.1902)***&\\
  $\hat{\beta}_2$  & -0.002  & 1.35&  &           &&& &&& \\
                   &(0.0126)& &                &&&&& &&\\
  $\hat{\beta}_2(\lambda)$  &  & & -0.002 &1.146           &- 0.0010&1.089&-5.38$\cdot 10^{-5}$ &1&-0.0001563&1.002 \\
                  && &        (0.013)        &&(0.00637)&&(3.289$\cdot 10^{-4}$)& &(0.000956)&\\
  $\hat{\beta}_3$ & -3.647$\cdot 10^{-5}$&1.059& &                & & &&&&   \\
                 &(3.897 $\cdot 10^{-6}$)***& &                & & &&&&&   \\
  $\hat{\beta}_3(\lambda)$ & &&  -3.647$\cdot 10^{-5}$&1.027&  -3.655$\cdot 10^{-5}$&1.019&-3.661$\cdot 10^{-5}$&1.005&-3.6607$\cdot 10^{-5}$&1.005 \\
                 &&&(3.897 $\cdot 10^{-6}$)***  &&(-3.823$\cdot 10^{-5}$)***&&(3.797$\cdot 10^{-6}$)***& &(-3.797$\cdot 10^{-6}$)*** \\
  $\hat{\beta}_4$  & 1.971 $\cdot 10^{-5}$ &1.284&& &                & & &&&  \\
                 &(2.202$\cdot 10^{-6}$)***&& &                & & &&&&   \\
  $\hat{\beta}_4(\lambda)$  &  &&1.971 $\cdot 10^{-5}$&1.121  &1.980$\cdot 10^{-5}$&1.076&1.988$\cdot 10^{-5}$&1.005&1.987$\cdot 10^{-5}$&1.007 \\
  &&& (2.202 $\cdot 10^{-6}$)*** &&(2.016$\cdot 10^{-6}$)***&&(1.948$\cdot 10^{-6}$)*** &&(1.950$\cdot 10^{-5}$)***&\\
  \hline
  $R^2$& 0.8314&&0.8314&&0.8314&&0.8314&&0.8314&\\
  $F_{statistic}$&70.68***&&70.68***&&70.68***&&70.68***&&70.68***&\\
  CN&39.35375&&25.471&&19.99&&4.2555&&4.909&\\
  MSE& 1.582988&&0.680108 &&0.4622&&0.06782&&0.07177&\\
  $\left\| \widehat{\boldsymbol{\beta}}(\lambda) \right\|$&19.14574&&18.51747&&18.27183&&17.4464&&17.5312&\\
  CV($\mathbf{HICP}$)&0.0703&&0.1004&&0.1251&&2.3232&&0.79991&\\
  \hline
\end{tabular}
\caption{OLS and raise estimations (with the first value of $\lambda$ that makes the CV higher than $0.1002506$, the $\lambda$ that makes $NC<20$, $\lambda_{min}$) and $\lambda$ that stabilizes the norm of the estimator for Euribor dataset} \label{resultsexample2}}
%\end{turn}
\end{sidewaystable}

\section{Conclusions}
    \label{conclusion}

Traditionally, penalized estimator (such as the ridge estimator, the Liu estimator, etc.) have been applied to estimate models with multicollinearity. These procedures exchange the mean square error by the bias and although the variance diminishes, the inference is lost and also the goodness of fit. Alternatively, the raise regression (\cite{Garcia2011} and \cite{Salmeron2017}) allows the mitigation of the problems generated by multicollinearity but without losing the inference and keeping the coefficient of determination.

This paper presents as novelty the generalization of  the raise estimator for any numbers of regressors together to its estimation, goodness of fit, global characteristics and its individual inference.
Anothers contributions are the development of the norm of the raised estimator, the analysis of the variance of the estimators, that the coefficient of variation increases when the raise regression is applied (and, as consequence, the raise regression is able to mitigate not only the essential multicollinearity but also the non-essential one) and complete some criteria to select the variable to be raised and the raising factor that can be applied individually or as a combination.
An interesting conclusion is the relation between the raise regression with the ordinary least squares and with the residualization.
Finally, the paper also present a comparison between the OLS, the raise and the ridge estimators under the criteria of MSE supporting the contribution of \cite{Garcia2020} who conclude that the ridge estimator is a particular case of the raising procedures.

Apart from these contributions, with the aim of having a complete vision of the technique, the paper also reviews some previous results in relation to the Mean Square Error, the Variance Inflation Factor, the Condition Number and the successive raise.

In conclusion, this paper contributes with the development of the raise regression with the goal of extend is application in many different fields where collinearity exists and ridge estimator is systematically applied without taking into consideration its lack of inference. Note that raise regression may mitigate essential and non-essential multicollinearity, maintaining the main characteristic of the original model and allowing the inference.

\section*{Acknowledgements}

%We would like to thank the reviewers for their valuable comments and suggestions.
This work has been supported by project PP2019-EI-02 of the University of Granada, Spain.

\bibliographystyle{Chicago}
\bibliography{bib}

\appendix

\section{Estimation of model (\ref{modelo0}) by OLS}
    \label{appendix_inicial}

   Given the general linear model (\ref{modelo0}) it is obtained that:
\begin{eqnarray}
    \widehat{\boldsymbol{\beta}} &=& \left( \mathbf{X}^{t} \mathbf{X} \right)^{-1} \mathbf{X}^{t} \mathbf{Y} = \left(
        \begin{array}{cc}
            \mathbf{X}_{-i}^{t} \mathbf{X}_{-i} & \mathbf{X}_{-i}^{t} \mathbf{X}_{i} \\
            \mathbf{X}_{i}^{t} \mathbf{X}_{-i} & \mathbf{X}_{i}^{t} \mathbf{X}_{i}
        \end{array} \right)^{-1} \cdot \left(
        \begin{array}{c}
            \mathbf{X}_{-i}^{t} \mathbf{Y} \\
            \mathbf{X}_{i}^{t} \mathbf{Y}
        \end{array} \right) \nonumber \\
        &=& \left(
        \begin{array}{cc}
            \mathbf{A} & \mathbf{B} \\
            \mathbf{B}^{t} & C
        \end{array} \right) \cdot \left(
        \begin{array}{c}
            \mathbf{X}_{-i}^{t} \mathbf{Y} \\
            \mathbf{X}_{i}^{t} \mathbf{Y}
        \end{array} \right) = \left(
        \begin{array}{c}
            \left( \mathbf{X}_{-i}^{t} \mathbf{X}_{-i} \right)^{-1} \mathbf{X}_{-i}^{t} \mathbf{Y} - \widehat{\boldsymbol{\alpha}} \cdot \frac{\mathbf{e}_{i}^{t} \mathbf{Y}}{\mathbf{e}_{i}^{t} \mathbf{e}_{i}}  \\
            \frac{\mathbf{e}_{i}^{t} \mathbf{Y}}{\mathbf{e}_{i}^{t} \mathbf{e}_{i}}
        \end{array} \right) = \left(
        \begin{array}{c}
            \widehat{\boldsymbol{\beta}}_{-i} \\
            \widehat{\beta}_{i}
        \end{array} \right), \label{eq_anexo_1}
\end{eqnarray}
only considering that $\mathbf{X} = [ \mathbf{X}_{i} \ \mathbf{X}_{-i} ]$, with $i=2,\dots,p$, and taking into account that:
\begin{eqnarray*}
    C &=& \left( \mathbf{X}_{i}^{t} \mathbf{X}_{i} - \mathbf{X}_{i}^{t} \mathbf{X}_{-i} \left( \mathbf{X}_{-i}^{t} \mathbf{X}_{-i} \right)^{-1} \mathbf{X}_{-i}^{t} \mathbf{X}_{i} \right)^{-1} = \left( \mathbf{X}_{i}^{t} \left( \mathbf{I} - \mathbf{X}_{-i} \left( \mathbf{X}_{-i}^{t} \mathbf{X}_{-i} \right)^{-1} \mathbf{X}_{-i}^{t} \right) \mathbf{X}_{i} \right)
    = \left( \mathbf{e}_{i}^{t} \mathbf{e}_{i} \right)^{-1}, \\
    \mathbf{B} &=& - \left( \mathbf{X}_{-i}^{t} \mathbf{X}_{-i} \right)^{-1} \mathbf{X}_{-i}^{t} \mathbf{X}_{i} \cdot \left( \mathbf{e}_{i}^{t} \mathbf{e}_{i} \right)^{-1}
    = - \widehat{\boldsymbol{\alpha}} \cdot \left( \mathbf{e}_{i}^{t} \mathbf{e}_{i} \right)^{-1} \\
    \mathbf{A} &=& \left( \mathbf{X}_{-i}^{t} \mathbf{X}_{-i} \right)^{-1} + \left( \mathbf{X}_{-i}^{t} \mathbf{X}_{-i} \right)^{-1} \mathbf{X}_{-i}^{t} \mathbf{X}_{i} \cdot \left( \mathbf{e}_{i}^{t} \mathbf{e}_{i} \right)^{-1} \mathbf{X}_{i}^{t} \mathbf{X}_{-i} \left( \mathbf{X}_{-i}^{t} \mathbf{X}_{-i} \right)^{-1}
    = \left( \mathbf{X}_{-i}^{t} \mathbf{X}_{-i} \right)^{-1} + \left( \mathbf{e}_{i}^{t} \mathbf{e}_{i} \right)^{-1} \cdot \widehat{\boldsymbol{\alpha}} \widehat{\boldsymbol{\alpha}}^{t},
\end{eqnarray*}
where $\widehat{\boldsymbol{\alpha}}$ y $\mathbf{e}_{i}^{t} \mathbf{e}_{i}$ are, respectively, the OLS estimator and the SSR of the auxiliary regression  (\ref{modelo_aux}) being $\mathbf{I}$ the identity matrix with adequate dimension.

On the other hand, supposing that the random disturbances are spherical, the individual inference will be given by the main diagonal of matrix $\left( \mathbf{X}^{t} \mathbf{X} \right)^{-1}$, it is to say:
\begin{equation}
    \label{eq_anexo_2}
    \left( \begin{array}{cc}
        \left( \mathbf{X}_{-i}^{t} \mathbf{X}_{-i} \right)^{-1} + \left( \mathbf{e}_{i}^{t} \mathbf{e}_{i} \right)^{-1} \cdot \widehat{\boldsymbol{\alpha}} \widehat{\boldsymbol{\alpha}}^{t} & - \widehat{\boldsymbol{\alpha}} \cdot \left( \mathbf{e}_{i}^{t} \mathbf{e}_{i} \right)^{-1} \\
        - \widehat{\boldsymbol{\alpha}}^{t} \cdot \left( \mathbf{e}_{i}^{t} \mathbf{e}_{i} \right)^{-1} & \left( \mathbf{e}_{i}^{t} \mathbf{e}_{i} \right)^{-1}
    \end{array} \right).
\end{equation}

In this case, the null hypothesis $H_{0}: \beta_{i} = 0$ is rejected (in favor of the alternative hypothesis $H_{1}: \beta_{i} \not= 0$) if:
\begin{equation}
    \label{inferencia_mco1}
    t_{exp} (\beta_{i}) = \left| \frac{\widehat{\beta}_{i}}{\widehat{\sigma} \cdot \sqrt{\left( \mathbf{e}_{i}^{t} \mathbf{e}_{i} \right)^{-1}}} \right| = \left| \frac{\mathbf{e}_{i}^{t} \mathbf{Y}}{\widehat{\sigma} \sqrt{\mathbf{e}_{i}^{t} \mathbf{e}_{i}}} \right| > t_{n-p}(1-\alpha/2),
\end{equation}
where $\widehat{\sigma}^{2}$ is  the estimation by OLS of the variance of the random disturbance:
$$\widehat{\sigma}^{2} = \frac{\mathbf{e}^{t} \mathbf{e}}{n-p},$$
where $\mathbf{e}$ are the residuals of model (\ref{modelo0}).

At the same time, the null hypothesis $H_{0}: \beta_{-i,j} = 0$ is rejected (in favor of the alternative hypothesis  $H_{1}: \beta_{-i,j} \not= 0$) if:
\begin{equation}
    \label{inferencia_mco2}
    t_{exp} (\beta_{-i,j}) = \left| \frac{\widehat{\beta}_{-i,j}}{\widehat{\sigma} \cdot \sqrt{w_{jj} + \left( \mathbf{e}_{i}^{t} \mathbf{e}_{i} \right)^{-1} \cdot \widehat{\alpha}_{j}^{2}}} \right| > t_{n-p}(1-\alpha/2), \quad j=1,\dots,i-1,i+1,\dots,p.
\end{equation}
Note that the element $j$ of $\widehat{\beta}_{-i}$ is $\widehat{\beta}_{-i,j} = \widehat{\gamma}_{-i,j} - \widehat{\alpha}_{j} \cdot \widehat{\beta}_{i}$ where $\gamma_{-i,j}$ is the element $j$ of $\left( \mathbf{X}_{-i}^{t} \mathbf{X}_{-i} \right)^{-1} \mathbf{X}_{-i}^{t} \mathbf{Y}$.

Finally, the residuals of the model are given by the following expression:
\begin{eqnarray}
    \mathbf{e} &=& \mathbf{Y} - \widehat{\mathbf{Y}} = \mathbf{Y} - \left( \mathbf{X}_{-i} \ \mathbf{X}_{i} \right) \cdot \left(
        \begin{array}{c}
            \left( \mathbf{X}_{-i}^{t} \mathbf{X}_{-i} \right)^{-1} \mathbf{X}_{-i}^{t} \mathbf{Y} - \widehat{\boldsymbol{\alpha}} \cdot \frac{\mathbf{e}_{i}^{t} \mathbf{Y}}{\mathbf{e}_{i}^{t} \mathbf{e}_{i}}  \\
            \frac{\mathbf{e}_{i}^{t} \mathbf{Y}}{\mathbf{e}_{i}^{t} \mathbf{e}_{i}}
        \end{array} \right) \nonumber \\
        &=& \mathbf{Y} - \mathbf{X}_{-i} \left( \mathbf{X}_{-i}^{t} \mathbf{X}_{-i} \right)^{-1} \mathbf{X}_{-i}^{t} \mathbf{Y} + \mathbf{X}_{-i} \widehat{\boldsymbol{\alpha}} \cdot \frac{\mathbf{e}_{i}^{t} \mathbf{Y}}{\mathbf{e}_{i}^{t} \mathbf{e}_{i}} - \mathbf{X}_{i} \frac{\mathbf{e}_{i}^{t} \mathbf{Y}}{\mathbf{e}_{i}^{t} \mathbf{e}_{i}} %\nonumber \\
        = \mathbf{Y} - \mathbf{X}_{-i} \left( \mathbf{X}_{-i}^{t} \mathbf{X}_{-i} \right)^{-1} \mathbf{X}_{-i}^{t} \mathbf{Y} - \mathbf{e}_{i} \frac{\mathbf{e}_{i}^{t} \mathbf{Y}}{\mathbf{e}_{i}^{t} \mathbf{e}_{i}}, \label{eq_anexo_3}
\end{eqnarray}
due to being $\mathbf{e}_{i}$ the residuals of the auxiliary regression (\ref{modelo_aux}) it is verified that $\mathbf{e}_{i} = \mathbf{X}_{i} - \mathbf{X}_{-i} \widehat{\boldsymbol{\alpha}}$.

\section{Calculation of the raise estimation}
    \label{appendixA}

    Taking into account the inversion of the partitioned matrices, it is obtained that:
    $$\widehat{\boldsymbol{\beta}}(\lambda) = \left(
            \begin{array}{cc}
                \mathbf{A}_{\lambda} & \mathbf{B}_{\lambda} \\
                \mathbf{B}_{\lambda}^{t} & C_{\lambda}
            \end{array} \right) \cdot \left(
            \begin{array}{c}
                \mathbf{X}_{-i}^{t} \mathbf{Y} \\
                \mathbf{X}_{-i}^{t} \mathbf{Y} + \lambda \mathbf{e}_{i}^{t} \mathbf{Y}
            \end{array} \right) = \left(
            \begin{array}{c}
                \mathbf{A}_{\lambda} \cdot \mathbf{X}_{-i}^{t} \mathbf{Y} + \mathbf{B}_{\lambda} \cdot \mathbf{X}_{-i}^{t} \mathbf{Y} + \lambda \cdot \mathbf{B}_{\lambda} \cdot \mathbf{e}_{i}^{t} \mathbf{Y} \\
                \mathbf{B}_{\lambda}^{t} \mathbf{X}_{-i}^{t} \mathbf{Y} + C_{\lambda} \cdot \mathbf{X}_{-i}^{t} \mathbf{Y} + \lambda \cdot C_{\lambda} \cdot \mathbf{e}_{i}^{t} \mathbf{Y}
            \end{array} \right) = \left(
            \begin{array}{c}
                \widehat{\boldsymbol{\beta}}(\lambda)_{-i} \\
                \widehat{\beta}(\lambda)_{i}
            \end{array} \right),$$
    where:
    \begin{eqnarray}
        C_{\lambda} &=& \left( \mathbf{X}_{i}^{t} \mathbf{X}_{i} + (\lambda^{2} + 2\lambda) \cdot \mathbf{e}_{i}^{t}\mathbf{e}_{i} - \mathbf{X}_{i}^{t} \mathbf{X}_{-i} \left( \mathbf{X}_{-i}^{t} \mathbf{X}_{-i} \right)^{-1} \mathbf{X}_{i}^{t} \mathbf{X}_{i} \right)^{-1} \nonumber \\
            &=& \left(  \mathbf{X}_{i}^{t} \left( \mathbf{I} - \mathbf{X}_{-i} \left( \mathbf{X}_{-i}^{t} \mathbf{X}_{-i} \right)^{-1} \mathbf{X}_{-i}^{t} \right)  \mathbf{X}_{i} + (\lambda^{2} + 2\lambda) \cdot \mathbf{e}_{i}^{t}\mathbf{e}_{i} \right)^{-1} \nonumber \\
            &=& \left( \mathbf{e}_{i}^{t}\mathbf{e}_{i} + (\lambda^{2} + 2\lambda) \cdot \mathbf{e}_{i}^{t}\mathbf{e}_{i} \right)^{-1}
            = \left( (\lambda^{2} + 1)^{2} \cdot \mathbf{e}_{i}^{t}\mathbf{e}_{i} \right)^{-1}, \nonumber \\
        \mathbf{B}_{\lambda} &=& -  \left( \mathbf{X}_{-i}^{t} \mathbf{X}_{-i} \right)^{-1} \mathbf{X}_{-i}^{t} \mathbf{X}_{i} \cdot C_{\lambda} = - \left( (\lambda^{2} + 1)^{2} \cdot \mathbf{e}_{i}^{t}\mathbf{e}_{i} \right)^{-1} \cdot \widehat{\boldsymbol{\alpha}}, \nonumber \\
        \mathbf{A}_{\lambda} &=& \left( \mathbf{X}_{-i}^{t} \mathbf{X}_{-i} \right)^{-1} + \left( \mathbf{X}_{-i}^{t} \mathbf{X}_{-i} \right)^{-1}  \mathbf{X}_{-i}^{t} \mathbf{X}_{i} \cdot C_{\lambda} \cdot  \mathbf{X}_{i}^{t} \mathbf{X}_{-i} \left( \mathbf{X}_{-i}^{t} \mathbf{X}_{-i} \right)^{-1} \nonumber \\
            &=& \left( \mathbf{X}_{-i}^{t} \mathbf{X}_{-i} \right)^{-1} + \left( (\lambda^{2} + 1)^{2} \cdot \mathbf{e}_{i}^{t}\mathbf{e}_{i} \right)^{-1} \cdot \widehat{\boldsymbol{\alpha}} \widehat{\boldsymbol{\alpha}}^{t}. \nonumber
    \end{eqnarray}

    Then:
    \begin{eqnarray}
        \widehat{\boldsymbol{\beta}}(\lambda)_{-i} &=& \left( \mathbf{X}_{-i}^{t} \mathbf{X}_{-i} \right)^{-1} \mathbf{X}_{-i}^{t} \mathbf{Y} +  \left( (\lambda^{2} + 1)^{2} \cdot \mathbf{e}_{i}^{t}\mathbf{e}_{i} \right)^{-1} \cdot \widehat{\boldsymbol{\alpha}} \widehat{\boldsymbol{\alpha}}^{t} \cdot \mathbf{X}_{-i}^{t} \mathbf{Y} - \left( (\lambda^{2} + 1)^{2} \cdot \mathbf{e}_{i}^{t}\mathbf{e}_{i} \right)^{-1} \cdot \widehat{\boldsymbol{\alpha}} \mathbf{X}_{i}^{t} \mathbf{Y} \nonumber \\
            & & - \lambda \left( (\lambda^{2} + 1)^{2} \cdot \mathbf{e}_{i}^{t}\mathbf{e}_{i} \right)^{-1} \cdot \widehat{\boldsymbol{\alpha}} \mathbf{e}_{i}^{t} \mathbf{Y} \nonumber \\
            &=& \left( \mathbf{X}_{-i}^{t} \mathbf{X}_{-i} \right)^{-1} \mathbf{X}_{-i}^{t} \mathbf{Y} +  \left( (\lambda^{2} + 1)^{2} \cdot \mathbf{e}_{i}^{t}\mathbf{e}_{i} \right)^{-1} \cdot \widehat{\boldsymbol{\alpha}} \cdot \left( \widehat{\boldsymbol{\alpha}}^{t} \mathbf{X}_{-i}^{t} - \mathbf{X}_{i}^{t} - \lambda \mathbf{e}_{i}^{t} \right) \cdot \mathbf{Y} \nonumber \\
            &=&  \left( \mathbf{X}_{-i}^{t} \mathbf{X}_{-i} \right)^{-1} \mathbf{X}_{-i}^{t} \mathbf{Y} +  \left( (\lambda^{2} + 1)^{2} \cdot \mathbf{e}_{i}^{t}\mathbf{e}_{i} \right)^{-1} (1 + \lambda) \cdot \widehat{\boldsymbol{\alpha}} \cdot \mathbf{e}_{i}^{t} \mathbf{Y} \nonumber \\
            &=& \left( \mathbf{X}_{-i}^{t} \mathbf{X}_{-i} \right)^{-1} \mathbf{X}_{-i}^{t} \mathbf{Y} - \widehat{\boldsymbol{\alpha}} \cdot \frac{\mathbf{e}_{i}^{t} \mathbf{Y}}{(1+\lambda) \cdot \mathbf{e}_{i}^{t} \mathbf{e}_{i}}, \nonumber \\
        \widehat{\beta}(\lambda)_{i} &=& - \left( (\lambda^{2} + 1)^{2} \cdot \mathbf{e}_{i}^{t}\mathbf{e}_{i} \right)^{-1} \cdot \left( \widehat{\boldsymbol{\alpha}}^{t} \mathbf{X}_{-i}^{t} - \mathbf{X}_{i}^{t} - \lambda \mathbf{e}_{i}^{t} \right) \cdot \mathbf{Y}
            = \left( (\lambda^{2} + 1)^{2} \cdot \mathbf{e}_{i}^{t}\mathbf{e}_{i} \right)^{-1} \cdot (1 + \lambda) \mathbf{e}_{i}^{t} \mathbf{Y} \nonumber \\
            &=& \frac{\mathbf{e}_{i}^{t} \mathbf{Y}}{(1+\lambda) \cdot \mathbf{e}_{i}^{t} \mathbf{e}_{i}}, \nonumber
    \end{eqnarray}
    where it was used that $\widehat{\boldsymbol{\alpha}}^{t} \mathbf{X}_{-i}^{t} - \mathbf{X}_{i}^{t} = - \mathbf{e}_{i}^{t}$ for $i=2,\dots,p$.

\section{Residualization}
    \label{appendix_ortogonal}

    \cite{Garcia2019} formalizes the residualization regression. Thus, parting from model (\ref{modelo0}), the following expression is obtained for the regression with orthogonal variables:
    \begin{equation}
        \label{modelo_ortogonal}
        \mathbf{Y} = \mathbf{X}_{O} \boldsymbol{\gamma} + \mathbf{w},
    \end{equation}
    where $\mathbf{X}_{O} = \left( \mathbf{X}_{-i} \ \mathbf{e}_{i} \right)$, being $\mathbf{X}_{-i}$ the result of eliminating the column (variable) $i$ of matrix $\mathbf{X}$, $i=2,\dots,p$,  and being $\mathbf{e}_{i}$ the residuals of the auxiliary regression (\ref{modelo_aux}).

Then, the estimation by OLS of model (\ref{modelo_ortogonal}) is given by:
\begin{equation}
    \widehat{\boldsymbol{\gamma}} = \left( \mathbf{X}_{O}^{t} \mathbf{X}_{O} \right)^{-1} \mathbf{X}_{O}^{t} \mathbf{Y} =
         \left(
        \begin{array}{cc}
            \left( \mathbf{X}_{-i}^{t} \mathbf{X}_{-i}  \right)^{-1} & \mathbf{0} \\
            \mathbf{0} & \left( \mathbf{e}_{i}^{t} \mathbf{e}_{i} \right)^{-1}
        \end{array} \right) \cdot \left(
        \begin{array}{c}
            \mathbf{X}_{-i}^{t} \mathbf{Y} \\
            \mathbf{e}_{i}^{t} \mathbf{Y}
        \end{array} \right) = \left(
        \begin{array}{c}
            \left( \mathbf{X}_{-i}^{t} \mathbf{X}_{-i}  \right)^{-1} \mathbf{X}_{-i}^{t} \mathbf{Y} \\
            \frac{\mathbf{e}_{i}^{t} \mathbf{Y}}{\mathbf{e}_{i}^{t} \mathbf{e}_{i}}
        \end{array} \right) = \left(
        \begin{array}{c}
            \widehat{\boldsymbol{\gamma}}_{-i} \\
            \widehat{\gamma}_{i}
        \end{array} \right). \label{est_ortogonal}
\end{equation}
Note, see expression (\ref{eq_anexo_1}), that $\widehat{\boldsymbol{\gamma}}_{i} = \widehat{\boldsymbol{\beta}}_{i}$.

On the other hand, the residuals of model (\ref{modelo_ortogonal}) are given by:
\begin{eqnarray}
    \mathbf{e}_{0} &=& \mathbf{Y} - \widehat{\mathbf{Y}}_{O} = \mathbf{Y} - \left( \mathbf{X}_{-i} \ \mathbf{e}_{i} \right) \cdot \left(
        \begin{array}{c}
            \left( \mathbf{X}_{-i}^{t} \mathbf{X}_{-i} \right)^{-1} \mathbf{X}_{-i}^{t} \mathbf{Y} \\
            \frac{\mathbf{e}_{i}^{t} \mathbf{Y}}{\mathbf{e}_{i}^{t} \mathbf{e}_{i}}
        \end{array} \right) \nonumber \\
        &=& \mathbf{Y} - \mathbf{X}_{-i} \left( \mathbf{X}_{-i}^{t} \mathbf{X}_{-i} \right)^{-1} \mathbf{X}_{-i}^{t} \mathbf{Y} - \mathbf{e}_{i} \frac{\mathbf{e}_{i}^{t} \mathbf{Y}}{\mathbf{e}_{i}^{t} \mathbf{e}_{i}}. \label{eq5}
\end{eqnarray}

It can be observed, see expressions (\ref{residuos_alzado}) and (\ref{eq_anexo_3}), that these expressions coincide with the ones of models (\ref{modelo0}) and (\ref{modelo_alzado}). Consequently, it is also verified that $\widehat{\sigma}^{2}_{O} = \widehat{\sigma}^{2}$.

Taking into account the expression (\ref{eq_anexo_2}), is evident that the main diagonal of matrix $\left( \mathbf{X}_{O}^{t} \mathbf{X}_{O} \right)^{-1}$ differs to the one associated to model (\ref{modelo0}) except for the element $i$. In this case, the null hypothesis  $H_{0}: \gamma_{i} = 0$ is rejected (in favor of the alternative hypothesis $H_{1}: \gamma_{i} \not= 0$) if:
\begin{equation}
    \label{inferencia_ortogonal1}
    t_{exp} (\gamma_{i}) = \left| \frac{\widehat{\gamma}_{i}}{\widehat{\sigma} \cdot \sqrt{\left( \mathbf{e}_{i}^{t} \mathbf{e}_{i} \right)^{-1}}} \right|  = \left| \frac{\mathbf{e}_{i}^{t} \mathbf{Y}}{\widehat{\sigma} \sqrt{\mathbf{e}_{i}^{t} \mathbf{e}_{i}}} \right|  > t_{n-p}(1-\alpha/2).
\end{equation}

Note that from (\ref{inferencia_mco1}) it is possible to establish that $t_{exp} (\gamma_{i}) = t_{exp} (\beta_{i})$.

At the same time, the null hypothesis $H_{0}: \gamma_{-i,j} = 0$ is rejected (in favor of the alternative hypothesis $H_{1}: \gamma_{-i,j} \not= 0$) if:
\begin{equation}
    \label{inferencia_ortogonal2}
    t_{exp} (\gamma_{-i,j}) = \left| \frac{\boldsymbol{\gamma}_{-i,j}}{\widehat{\sigma} \cdot \sqrt{w_{jj}}} \right| > t_{n-p}(1-\alpha/2). \quad j=1,\dots,i-1,i+1,\dots,p.
\end{equation}
Note that $\boldsymbol{\gamma}_{-i,j}$ is the element $j$ of $\boldsymbol{\gamma}_{-i}$.

From the point of view of multicollinearity, the VIF of the orthogonalized variable is equal to 1 while the VIFs of the rest of variable coincide to the VIF of matrix  $\mathbf{X}_{-i}$. Analogously, the CN obtained from the matrix   $\mathbf{X}_{O}$ coincide to the one obtained from matrix $\mathbf{X}_{-i}$.

Finally, the MSE of the residualization method is:
$$MSE \left( \boldsymbol{\gamma} \right) = \sigma^{2} \cdot \left[ tr \left( (\mathbf{X}_{-i}^{t} \mathbf{X}_{-i})^{-1} \right)  + \left( \mathbf{e}_{i}^{t} \mathbf{e}_{i} \right)^{-1} \right] + \beta_{i}^{2} \cdot \sum \limits_{j=0, j \not= i}^{p} \widehat{\alpha}_{j}^{2}.$$

\section{Behavior of the estimators in MSE}
    \label{appendix_mse}

It is known that, see \cite{Theobald1974},  given the general linear model %
\eqref{modelo0}, it is possible to compare two estimators with
the form $\boldsymbol{\hat{\beta}}_{i}=\mathbf{C}_{i}\mathbf{Y}$ with $i=1,2$
by applying the criterion of the matrix of the root mean square (MtxMSE)
defined by:
\begin{equation}\label{eq: Matriz de error cuadratico medio}
\textup{MtxMSE}\left( \widehat{\boldsymbol{\beta }}\right) =E\left[ \left(
\widehat{\boldsymbol{\beta }}-\boldsymbol{\beta }\right) \left( \widehat{%
\boldsymbol{\beta }}-\boldsymbol{\beta }\right) ^{t}\right] .
\end{equation}

From:
\begin{equation}\label{eq: delta}
\boldsymbol{\Delta }=\textup{MtxMSE}\left( \widehat{\boldsymbol{\beta }}%
_{2}\right) -\textup{MtxMSE}\left( \widehat{\boldsymbol{\beta }}_{1}\right)
,
\end{equation}%
it is known that if the matrix $\boldsymbol{\Delta }$ is definite or
positive semidefinite, then estimator $\widehat{\boldsymbol{\beta }}_{1}$
will be more appropriate than estimator $\widehat{\boldsymbol{\beta }}_{2}$.
It is also verified that $\theta =MSE\left( \widehat{\boldsymbol{\beta }}%
_{2}\right) -MSE\left( \widehat{\boldsymbol{\beta }}_{1}\right) \geq 0$.
Hence, it is possible to conclude that $\widehat{\boldsymbol{\beta }}_{1}$
is better than $\widehat{\boldsymbol{\beta }}_{2}$ in terms of MSE.

Based on \cite{Farebrother1976}, it is possible to rewrite \eqref{eq: delta}
as:
\begin{equation}
\boldsymbol{\Delta }=\sigma ^{2}\mathbf{S}-\textup{Bias}\left( \widehat{%
\boldsymbol{\beta }}_{1}\right) \textup{Bias}\left( \widehat{\boldsymbol{%
\beta }}_{2}\right) ,  \label{eq: delta 2}
\end{equation}%
where $\mathbf{S} = \mathbf{C}_{2} \mathbf{C}_{2}^{t} - \mathbf{C}_{1} \mathbf{C}_{1}^{t}$.

From the above expression and following Theorem 2.2.2 of \cite{Trenklar1980}, it is possible to obtain the following result.

\begin{proposition}
\label{Proposition 1} Let $\widehat{\boldsymbol{\beta }}_{i}=\mathbf{C}_{i}%
\mathbf{Y}$, with $i=1,2$, be two homogeneous linear estimators of $\boldsymbol{\beta}$ such that $\mathbf{S}$ is a positive definite matrix.
Furthermore, let the following inequality be valid:
\begin{equation}
\boldsymbol{\beta }^{t}\left( \mathbf{C}_{1}\mathbf{X}-\mathbf{I}%
_{p}\right) ^{t}\mathbf{S}^{-1}\left( \mathbf{C}_{1}\mathbf{X}-\mathbf{%
I}_{p}\right) \boldsymbol{\beta }<\sigma ^{2},  \label{eq: inequality}
\end{equation}%
then $\boldsymbol{\Delta }=\textup{MtxMSE}\left( \widehat{\boldsymbol{\beta
}}_{2}\right) -\textup{MtxMSE}\left( \widehat{\boldsymbol{\beta }}%
_{1}\right) >0$.
\end{proposition}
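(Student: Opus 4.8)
The plan is to reduce Proposition~\ref{Proposition 1} to a statement about a rank-one perturbation of a positive definite matrix and then invoke Theorem~2.2.2 of \cite{Trenklar1980} (equivalently, the eigenvalue behaviour of such perturbations). Throughout I would write $\mathbf{d}=\left(\mathbf{C}_{1}\mathbf{X}-\mathbf{I}_{p}\right)\boldsymbol{\beta}$ and $\mathbf{g}=\left(\mathbf{C}_{2}\mathbf{X}-\mathbf{I}_{p}\right)\boldsymbol{\beta}$ for the two bias vectors, so that the hypothesis \eqref{eq: inequality} reads $\mathbf{d}^{t}\mathbf{S}^{-1}\mathbf{d}<\sigma^{2}$.

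First I would make the bias--variance decomposition of $\textup{MtxMSE}$ explicit. Since $\widehat{\boldsymbol{\beta}}_{i}=\mathbf{C}_{i}\mathbf{Y}$ with $E[\mathbf{Y}]=\mathbf{X}\boldsymbol{\beta}$ and $Var(\mathbf{Y})=\sigma^{2}\mathbf{I}$, one has $\textup{Bias}(\widehat{\boldsymbol{\beta}}_{i})=(\mathbf{C}_{i}\mathbf{X}-\mathbf{I}_{p})\boldsymbol{\beta}$ and $Var(\widehat{\boldsymbol{\beta}}_{i})=\sigma^{2}\mathbf{C}_{i}\mathbf{C}_{i}^{t}$, hence $\textup{MtxMSE}(\widehat{\boldsymbol{\beta}}_{i})=\sigma^{2}\mathbf{C}_{i}\mathbf{C}_{i}^{t}+(\mathbf{C}_{i}\mathbf{X}-\mathbf{I}_{p})\boldsymbol{\beta}\boldsymbol{\beta}^{t}(\mathbf{C}_{i}\mathbf{X}-\mathbf{I}_{p})^{t}$. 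Subtracting and using $\mathbf{S}=\mathbf{C}_{2}\mathbf{C}_{2}^{t}-\mathbf{C}_{1}\mathbf{C}_{1}^{t}$, as in \eqref{eq: delta 2}, yields $\boldsymbol{\Delta}=\sigma^{2}\mathbf{S}-\mathbf{d}\mathbf{d}^{t}+\mathbf{g}\mathbf{g}^{t}$. Because $\mathbf{g}\mathbf{g}^{t}$ is positive semidefinite, adding it cannot destroy positive definiteness, so it suffices to prove that $\sigma^{2}\mathbf{S}-\mathbf{d}\mathbf{d}^{t}$ is positive definite.

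For that step I would use a congruence transformation: since $\mathbf{S}$ is positive definite it has an invertible symmetric positive definite square root $\mathbf{S}^{1/2}$, and $\sigma^{2}\mathbf{S}-\mathbf{d}\mathbf{d}^{t}=\mathbf{S}^{1/2}\bigl(\sigma^{2}\mathbf{I}_{p}-\mathbf{v}\mathbf{v}^{t}\bigr)\mathbf{S}^{1/2}$ with $\mathbf{v}=\mathbf{S}^{-1/2}\mathbf{d}$. Congruence by a nonsingular matrix preserves positive definiteness, so the claim is equivalent to $\sigma^{2}\mathbf{I}_{p}-\mathbf{v}\mathbf{v}^{t}$ being positive definite. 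The matrix $\mathbf{v}\mathbf{v}^{t}$ has rank one with nonzero eigenvalue $\mathbf{v}^{t}\mathbf{v}=\mathbf{d}^{t}\mathbf{S}^{-1}\mathbf{d}$, so $\sigma^{2}\mathbf{I}_{p}-\mathbf{v}\mathbf{v}^{t}$ has eigenvalues $\sigma^{2}$ (with multiplicity $p-1$) and $\sigma^{2}-\mathbf{d}^{t}\mathbf{S}^{-1}\mathbf{d}$; it is therefore positive definite precisely when $\mathbf{d}^{t}\mathbf{S}^{-1}\mathbf{d}<\sigma^{2}$, which is exactly hypothesis \eqref{eq: inequality}. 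Chaining back, $\boldsymbol{\Delta}$ is positive definite, and by the discussion in Appendix~\ref{appendix_mse} this also gives $\theta=MSE(\widehat{\boldsymbol{\beta}}_{2})-MSE(\widehat{\boldsymbol{\beta}}_{1})\geq0$ as a corollary.

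All of this is elementary linear algebra; the one point that deserves care is the rank-one perturbation argument, where I would quote the eigenvalue characterization (equivalently, the identity $\det(\sigma^{2}\mathbf{I}_{p}-\mathbf{v}\mathbf{v}^{t})=(\sigma^{2})^{p-1}(\sigma^{2}-\mathbf{v}^{t}\mathbf{v})$ applied to the leading principal minors, i.e.\ Theorem~2.2.2 of \cite{Trenklar1980}) rather than re-proving it. A potential technical worry --- whether $\mathbf{d}$ lies in the range of $\mathbf{S}^{1/2}$ so that $\mathbf{v}$ is well defined --- does not arise, since $\mathbf{S}$ is assumed positive definite and hence $\mathbf{S}^{1/2}$ is invertible.
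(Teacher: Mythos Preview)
Your argument is correct. Note, however, that the paper does not actually supply a proof of Proposition~\ref{Proposition 1}: it merely states the result as an immediate consequence of Theorem~2.2.2 in \cite{Trenklar1980} (together with the Farebrother-type decomposition \eqref{eq: delta 2}). What you have written is precisely a self-contained derivation of that cited result in the present notation: the bias--variance decomposition gives $\boldsymbol{\Delta}=\sigma^{2}\mathbf{S}-\mathbf{d}\mathbf{d}^{t}+\mathbf{g}\mathbf{g}^{t}$, discarding the positive semidefinite term $\mathbf{g}\mathbf{g}^{t}$ reduces the question to the positive definiteness of $\sigma^{2}\mathbf{S}-\mathbf{d}\mathbf{d}^{t}$, and the congruence with $\mathbf{S}^{1/2}$ followed by the rank-one eigenvalue count is exactly the standard route to the criterion $\mathbf{d}^{t}\mathbf{S}^{-1}\mathbf{d}<\sigma^{2}$. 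So there is no genuine methodological difference to report; you have simply made explicit what the paper leaves to the reference.

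One minor remark: the paper's display \eqref{eq: delta 2} contains an evident misprint (it writes $\textup{Bias}(\widehat{\boldsymbol{\beta}}_{1})\textup{Bias}(\widehat{\boldsymbol{\beta}}_{2})$ rather than the full difference of the two bias outer products). Your version $\boldsymbol{\Delta}=\sigma^{2}\mathbf{S}-\mathbf{d}\mathbf{d}^{t}+\mathbf{g}\mathbf{g}^{t}$ is the correct general expression, and your observation that $\mathbf{g}\mathbf{g}^{t}\succeq 0$ handles the case where $\widehat{\boldsymbol{\beta}}_{2}$ is itself biased, which the proposition as stated does not exclude.
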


If it is verified that $\mathbf{S}$ is a positive definite matrix, based on
Proposition \ref{Proposition 1} it is possible to state that estimator $%
\widehat{\boldsymbol{\beta }}_{1}$ is better than estimator $\widehat{%
\boldsymbol{\beta }}_{2}$ under the root mean square error matrix criterion
and the MSE criterion (see \cite{Theobald1974}, \cite{Farebrother1976} and \cite{Trenklar1980}).

\end{document}